\newcommand{\cmark}{\ding{51}}%
\newcommand{\xmark}{\ding{55}}%
\makeatletter \@addtoreset{equation}{section} \makeatother
\let\old@startsection=\@startsection
\let\oldl@section=\l@section
\renewcommand{\@startsection}[6]{\old@startsection{#1}{#2}{#3}{#4}{#5}{#6\mathversion{bold}}}
\renewcommand{\l@section}[2]{\oldl@section{\mathversion{bold}#1}{#2}}
\let\old@makecaption=\@makecaption
\def\@makecaption{\small\old@makecaption}
\newtheorem{defn}{Def}
\newtheorem{thm}{Th}
\newtheorem{expl}{Ex}
\newtheorem{exc}{Exercise}
\begin{document}

%%%%%%%%%%%%%%%%%%%%%%%%%%%%%%%%%%%%%%%%%%%%%%%%%%%%%%%%%%%%%%%%%%%%%%%%%%%%%%%%

\begin{flushright}\footnotesize
%\texttt{ITEP-TH-nn/yy}\\
\texttt{NORDITA 2017-137} \\
\texttt{UUITP-04/17}
%\vspace{0.6cm}
\end{flushright}

\renewcommand{\thefootnote}{\fnsymbol{footnote}}
\setcounter{footnote}{0}

\begin{center}
{\Large\textbf{\mathversion{bold} Integrability in Sigma-Models}
\par}

\vspace{0.8cm}

\textrm{
K.~Zarembo\footnote{Also at ITEP, Moscow, Russia}}
\vspace{4mm}

\textit{Nordita,  Stockholm University and KTH Royal Institute of Technology,
Roslagstullsbacken 23, SE-106 91 Stockholm, Sweden}\\
\textit{Department of Physics and Astronomy, Uppsala University\\
SE-751 08 Uppsala, Sweden}\\
\vspace{0.2cm}
\texttt{zarembo@nordita.org}
%\vspace{3mm}

\vspace{3mm}

%%%%%%%%

\par\vspace{1cm}

\textbf{Abstract} \vspace{3mm}

\begin{minipage}{13cm}
This is a write-up of lectures on integrable sigma-models, which covers the following topics: (1) Homogeneous spaces, (2) Classical integrability of sigma-models in two dimensions, (3) Topological terms, (4) Background-field method and beta-function, (5)  S-matrix bootstrap in the $O(N)$ model, (6) Supersymmetric cosets and strings on $AdS_d\times X$.
\end{minipage}

\end{center}

\vspace{0.5cm}

%%%%%%%%%%%%%%%%%%%%%%%%%%%%%%%%%%%%%%%%%%%%%%%%%%%%%%%%%%%%%%%%%%%%%%%%%%%%%%%%

%\thispagestyle{empty}

%\newpage
\setcounter{page}{1}
\renewcommand{\thefootnote}{\arabic{footnote}}
\setcounter{footnote}{0}

%%%%%%%%%%%%%%%%%%%%%%%%%%%%%%%%%%%%%%%%%%%%%%%%%%%%%%%%%%%%%%%%%%%%%%%%%%%%%%%%
\section{Introduction}

A sigma-model is a field theory wherein fields take values in a curved manifold $\mathcal{M}$. In other words, a field configuration is a map\footnote{The space-time manifold $\Sigma $, for the purpose of these notes, will almost exclusively be the 2d Minkowski space with $(+-)$ signature or the Euclidean $\mathbbm{R}^2$. } 
\begin{equation}
 X^M(\sigma ): ~\Sigma \rightarrow \mathcal{M}.
\end{equation}
Sigma-models arise as effective field theories in many physical contexts that range from low-energy QCD \cite{Weinberg:1966fm,Weinberg:1968de} to condensed-matter systems \cite{Haldane:1982rj,Haldane:1983ru,Tsvelik-book}. 

Another vast area of applications of sigma-models is string theory, where they govern string propagation in curved backgrounds.
The ultimate goal of these notes is to introduce string sigma-models relevant for holographic duality whose key feature is complete integrability. Quantization of integrable sigma-models will then be exemplified  by the S-matrix bootstrap in the $O(N)$ model \cite{Zamolodchikov:1978xm}. Although sigma-models that arise in the holographic duality are in many respects different, methods for their exact solution are conceptually the same. 

The canonical example of holography, the AdS/CFT correspondence, deals with string theory on $AdS_5\times S^5$, a non-linear manifold whose curvature is controlled by the 't~Hooft coupling of the dual gauge theory. The sigma-model on $AdS_5\times S^5$ \cite{Metsaev:1998it} is completely integrable \cite{Bena:2003wd}, a remarkable property not only of this particular model, but of a whole class of holographic string backgrounds, which underlies proliferation of integrability methods in the gauge-string duality. The goal of these notes is a step-by-step introduction to integrable sigma-models.

Given local coordinates $X^M$ on $\mathcal{M}$, the most general two-derivative Lagrangian of a sigma-model is 
\begin{equation}\label{Lagrangian-sigma-general}
 \mathcal{L}=\frac{1}{2}\left(\sqrt{|h|}\,h^{ab}G_{MN}(X)\partial _aX^M\partial _bX^N
 +\varepsilon ^{ab}B_{MN}(X)\partial_a X^M\partial _bX^N\right).
\end{equation}
Transformations that leave the metric and B-field  invariant translate into global symmetries of the sigma-model. Symmetries are neither necessary nor sufficient for integrability, but they allow one to build a large class of integrable models. The sigma-models arising in the holographic duality are precisely of this type. For this reason we concentrate on the cases when the target space $\mathcal{M}$  admits an action of a (simple) Lie group (or supergroup) $G$, and in the next section recollect basic facts about action  of Lie groups on manifolds, following \cite{Dubrovin-book}. We then describe classical integrability of sigma-models on such manifolds, following \cite{Faddeev:1987ph}. 

\section{Geometry}\label{Geometry:sec}

We start by collecting basic geometric facts about homogeneous spaces, closely following  \cite{Dubrovin-book}.

\begin{defn}
A map
$$
 T_g(x):~G\times \mathcal{M}\rightarrow \mathcal{M}
$$
defines left (right) action of group $G$ on manifold $\mathcal{M}$, if
$$
 T_gT_h=T_{gh}\qquad (T_gT_h=T_{hg})
$$
and $T_1={\rm id}$.
\end{defn}

The stability group -- or the little group -- of a point $x\in\mathcal{M}$ is defined as the set of all elements in $G$ that leave $x$ intact:
$$
 H_x=\left\{g\in G\,|\, T_g(x)=x\right\}.
$$

\begin{defn}
$\mathcal{M}$ is a left (right) homogeneous space of group $G$, if the action of $G$ on $\mathcal{M}$  is transitive, namely if
$$
 \forall x,y\in \mathcal{M}~~\exists g\in G:~T_g(x)=y.
$$
\end{defn}

Stability groups of any two points in a homogeneous space are isomorphic to one another. To see this we notice that any two points in $\mathcal{M}$, $x$ and $y$, are  related by the group action: $T_g(x)=y$ and $T_{g^{-1}}(y)=x$. Consider now an element $h\in H_x$ ($T_h(x)=x$), then
$$
 y=T_g(x)=T_gT_h(x)=T_gT_hT_{g^{-1}}(y)=T_{ghg^{-1}}(y).
$$
Therefore $ghg^{-1}\in H_y$ and the stability groups $H_y$ and $H_x$ are related by conjugation:
$$
 H_y=gH_xg^{-1}.
$$
In particular, they have the same multiplication table.

\begin{figure}[t]
\begin{center}
 \centerline{\includegraphics[width=8cm]{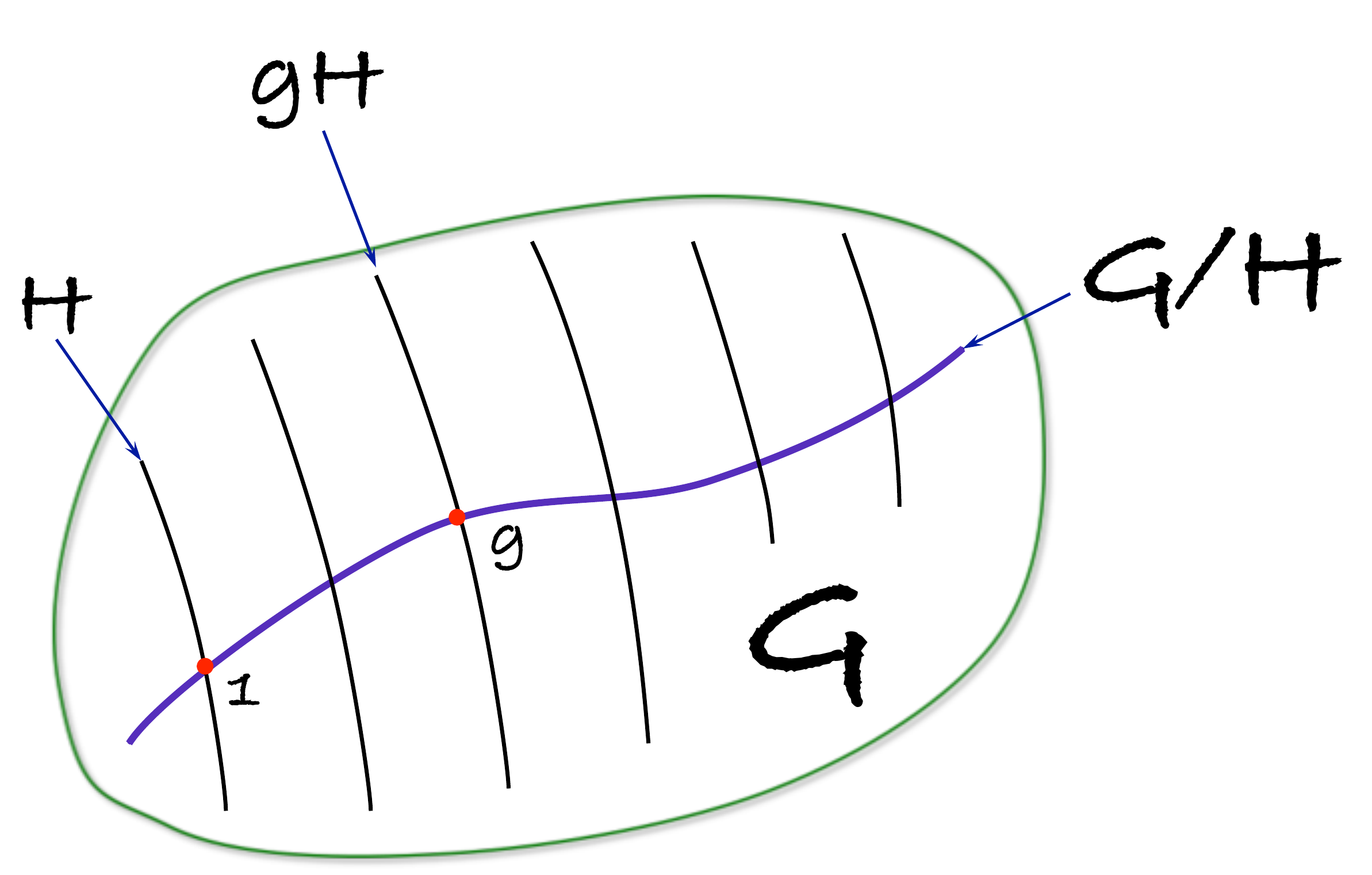}}
\caption{\label{coset}\small A group manifold foliated over equivalence classes $gH$. In the coset $G/H$, each equivalence class is projected to one point.}
\end{center}
\end{figure}

Given a  subgroup $H\subset G$, one can define a (right) coset $G/H$ as a set of equivalence classes  with respect to right multiplication by $H$:
$$
 G/H=\left\{g\sim gh\,|\,g\in G,\, h\in H\right\}.
$$
A set $gH$, obtained by multiplying  all elements of $H$ by $g$, constitutes one point in $G/H$ (fig.~\ref{coset}). One can show that for a closed Lie subgroup $H\subset G$, the coset space $G/H$ is a smooth manifold. The left coset $H\backslash G$ is defined in a similar way.

The coset $G/H$ is a left homogeneous space of $G$, with the group action defined by left multiplication:
\begin{equation}
 T_k(gH)=kgH.
\end{equation}
Interestingly, the converse is also true, in virtue of the following theorem.

\begin{thm}
 Homogeneous space $\mathcal{M}$ is isomorphic to the coset of its symmetry group $G$ by the stability group $H_{x_0}$ of any point $x_0\in\mathcal{M}$:
 $\mathcal{M}=G/H_{x_0}$.
\end{thm}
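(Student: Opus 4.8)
The plan is to produce an explicit $G$-equivariant bijection between the coset and the manifold, and then upgrade it to a diffeomorphism. Fixing the base point $x_0$, the only natural candidate is the \emph{orbit map}
\begin{equation}
 \Phi:~G/H_{x_0}\rightarrow\mathcal{M},\qquad \Phi(gH_{x_0})=T_g(x_0).
\end{equation}
The first thing I would verify is that $\Phi$ is well defined on equivalence classes: replacing $g$ by $gh$ with $h\in H_{x_0}$ gives $T_{gh}(x_0)=T_gT_h(x_0)=T_g(x_0)$, since $T_h(x_0)=x_0$ by the very definition of the stability group. Hence the value depends only on the coset $gH_{x_0}$, not on the chosen representative.

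Next I would establish that $\Phi$ is a bijection and is equivariant. Surjectivity is immediate from transitivity: for any $y\in\mathcal{M}$ there is a $g\in G$ with $T_g(x_0)=y$, so $\Phi(gH_{x_0})=y$. For injectivity, suppose $T_{g_1}(x_0)=T_{g_2}(x_0)$; acting with $T_{g_2^{-1}}$ and using $T_{g_2^{-1}}T_{g_1}=T_{g_2^{-1}g_1}$ together with $T_{g_2^{-1}}T_{g_2}=T_1=\mathrm{id}$, one finds $T_{g_2^{-1}g_1}(x_0)=x_0$, i.e.\ $g_2^{-1}g_1\in H_{x_0}$, which means $g_1H_{x_0}=g_2H_{x_0}$. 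Finally, $\Phi$ intertwines the two $G$-actions, since $\Phi(T_k(gH_{x_0}))=\Phi(kgH_{x_0})=T_{kg}(x_0)=T_k(\Phi(gH_{x_0}))$. At the purely algebraic level this already identifies $\mathcal{M}$ with $G/H_{x_0}$ as homogeneous $G$-spaces.

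The genuinely non-trivial point, and the step I expect to be the main obstacle, is promoting this set-theoretic equivariant bijection to a \emph{diffeomorphism}. Here I would invoke the fact quoted earlier that, for a closed Lie subgroup, $G/H_{x_0}$ is a smooth manifold and the projection $\pi:G\rightarrow G/H_{x_0}$ admits local smooth sections. Smoothness of $\Phi$ follows because the orbit map $g\mapsto T_g(x_0)$ is smooth and factors as $\Phi\circ\pi$. For the inverse one uses that equivariance forces the orbit map to have constant rank, while surjectivity then forces that rank to be maximal, so the orbit map is a submersion; composing a local section of $\pi$ with a local right inverse of the orbit map exhibits $\Phi^{-1}$ as smooth. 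Equivalently, one may simply cite the standard theorem that a continuous equivariant bijection between homogeneous spaces of a (second-countable) Lie group acting smoothly and transitively is automatically a diffeomorphism. Modulo this analytic input, the identification $\mathcal{M}=G/H_{x_0}$ is complete.
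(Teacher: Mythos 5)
Your proposal is correct and follows essentially the same route as the paper: both construct the orbit map $gH_{x_0}\mapsto T_g(x_0)$, check it is well defined on equivalence classes, and deduce surjectivity from transitivity and injectivity from $g_2^{-1}g_1\in H_{x_0}$. Your additional remarks on $G$-equivariance and on upgrading the bijection to a diffeomorphism go beyond the paper's purely set-theoretic argument, which is a welcome strengthening but not a different method.
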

\begin{proof}
Define a map $f:\,G/H_{x_0}\rightarrow \mathcal{M}$ by
$$
 f(gH)=T_g(x_0).
$$
This map
\begin{itemize}
 \item does not depend on the representative in the equivalence class $gH$:
 $$
  T_{gh}(x_0)=T_gT_h(x_0)=T_g(x_0),
 $$
 and thus maps the whole equivalence class to a single point in $\mathcal{M}$.
 \item covers all $\mathcal{M}$. This follows from transitivity of the group action.
 \item is one-to-one. Indeed, assuming that $f(g_1H)=f(g_2H)$, we have \\ $T_{g_1}(x_0)=T_{g_2}(x_0)$ and $T_{g_2^{-1}g_1}(x_0)=x_0$. Consequently, $g_2^{-1}g_1\in H_{x_0}$. Denoting $g_2^{-1}g_1=h$ we can thus write $g_1=g_2h$ with $h\in H_{x_0}$, which proves that $g_1$,  $g_2$ belong to the same equivalence class. Different equivalence classes, therefore, map to different points on $\mathcal{M}$.
\end{itemize}

The construction does not really depend on the base point $x_0$, because the stability groups $H_x$ for different $x\in\mathcal{M}$ are all isomorphic.
\end{proof}

An algebraic characterization of homogeneous spaces as cosets of a group by its subgroup is very convenient for field-theory applications. Since the construction may look rather abstract in the beginning, we illustrate it on a number of examples.

\begin{expl}
 Group manifold itself is both left and right homogeneous space.
\end{expl}
\begin{proof}
 The group action is defined by left or right multiplication:
\begin{eqnarray}
 T_g^L(x)&=&gx
\nonumber \\
T_g^R(x)&=&xg.
\end{eqnarray}
\end{proof}

\begin{figure}[t]
\begin{center}
 \centerline{\includegraphics[width=7cm]{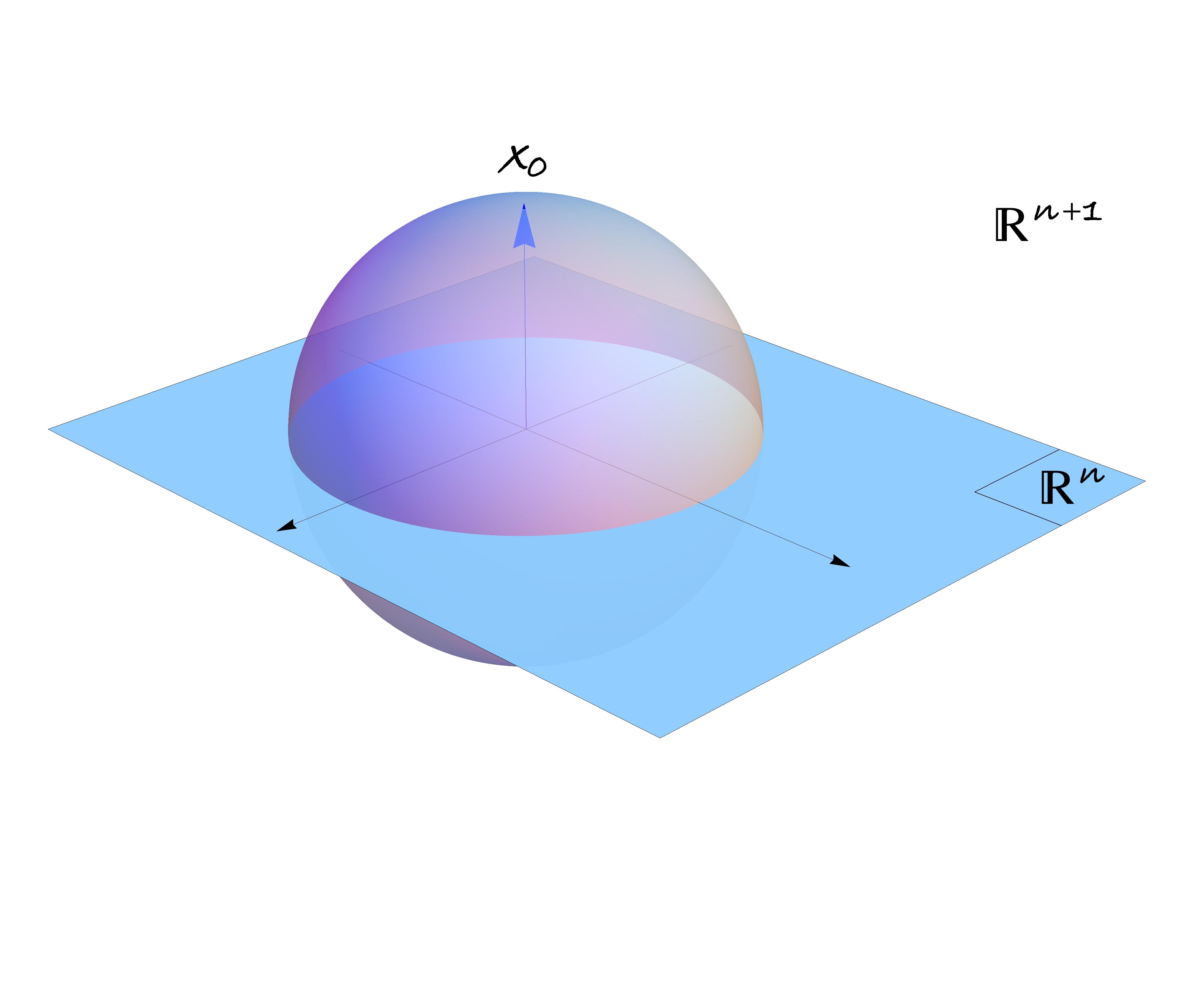}}
\caption{\label{splines}\small $n$-dimensional sphere is $SO(n+1)/SO(n)$.}
\end{center}
\end{figure}

\begin{expl}
 $n$-dimensional sphere $S^n$
\end{expl}
\begin{proof}
 The $SO(n+1)$ rotation group acts transitively on the unit sphere in $\mathbbm{R}^{n+1}$. The stability group of the north pole is the $SO(n)$ rotation group of the lateral hyperplane (fig.~\ref{splines}). Consequently,
\begin{equation}
 S^n=SO(n+1)/SO(n).
\end{equation}
\end{proof}

\begin{expl}
Complex projective space $\mathbbm{CP}^n=\left\{\mathbf{z}\sim \lambda \mathbf{z}\,|\,\lambda \in \mathbbm{C}^*,\mathbf{z}\in\mathbbm{C}^{n+1}\right\}$
\end{expl}
\begin{proof}
$SU(n+1)$ acts transitively on $\mathbbm{CP}^n$. The invariance subgroup of  the point $(1,\mathbf{0})\sim (\lambda ,\mathbf{0})$ is
$$
\left(\,
\begin{array}{ c : c cc}
 \,{\rm e}\,^{i\varphi }  & &  \\\hdashline
 & & & \\
    &        &   SU(n)\times \,{\rm e}\,^{-\frac{i\varphi }{n}}   &  \\ 
  & & & \\  
\end{array}\,\right)
\begin{pmatrix}
 1  \\ 
 0  \\ 
 \vdots  \\ 
 0  \\ 
 \end{pmatrix}
$$
Consequently, $\mathbbm{CP}^n=SU(n+1)/SU(n)\times U(1)$.
\end{proof}

\begin{expl}
 Anti-de-Sitter space
\end{expl}
\begin{proof}
 The $(d+1)$-dimensional Anti-de-Sitter space has coordinates $(x^\mu ,z)$ and the line element
\begin{equation}\label{PoincareAdS}
 ds^2=\frac{dx_\mu dx^\mu +dz^2}{z^2}\,,
\end{equation}
where  $\mu=0\ldots d-1 $ and the indices are contracted with the $-+\ldots +$ Minkowski metric (for $AdS_{d+1}$) or with Euclidean metric (for $EAdS_{d+1}$). The AdS space can be realized as a hypersurface in $\mathbbm{R}^{d,2}$ (or $\mathbbm{R}^{d+1,1}$) with embedding coordinates
\begin{eqnarray}
 X^\mu &=&\frac{x^\mu }{z}
\nonumber \\
 X^{-1}&=&\frac{x^2+z^2+1}{2z}
\nonumber \\
X^{d}&=&\frac{x^2+z^2-1}{2z}\,,
\end{eqnarray}
that satisfy
\begin{equation}\label{embeddingAdS}
 \eta _{MN}X^MX^N+1=0,
\end{equation}
where $M,N=-1\ldots d$ and $\eta _{MN}$ is the  pseudo-Euclidean metric with signature $-\mp+\ldots +$ (the upper sign is for Minkowskian AdS). 

\begin{figure}[t]
\begin{center}
 \subfigure[]{
   \includegraphics[height=5.2 cm] {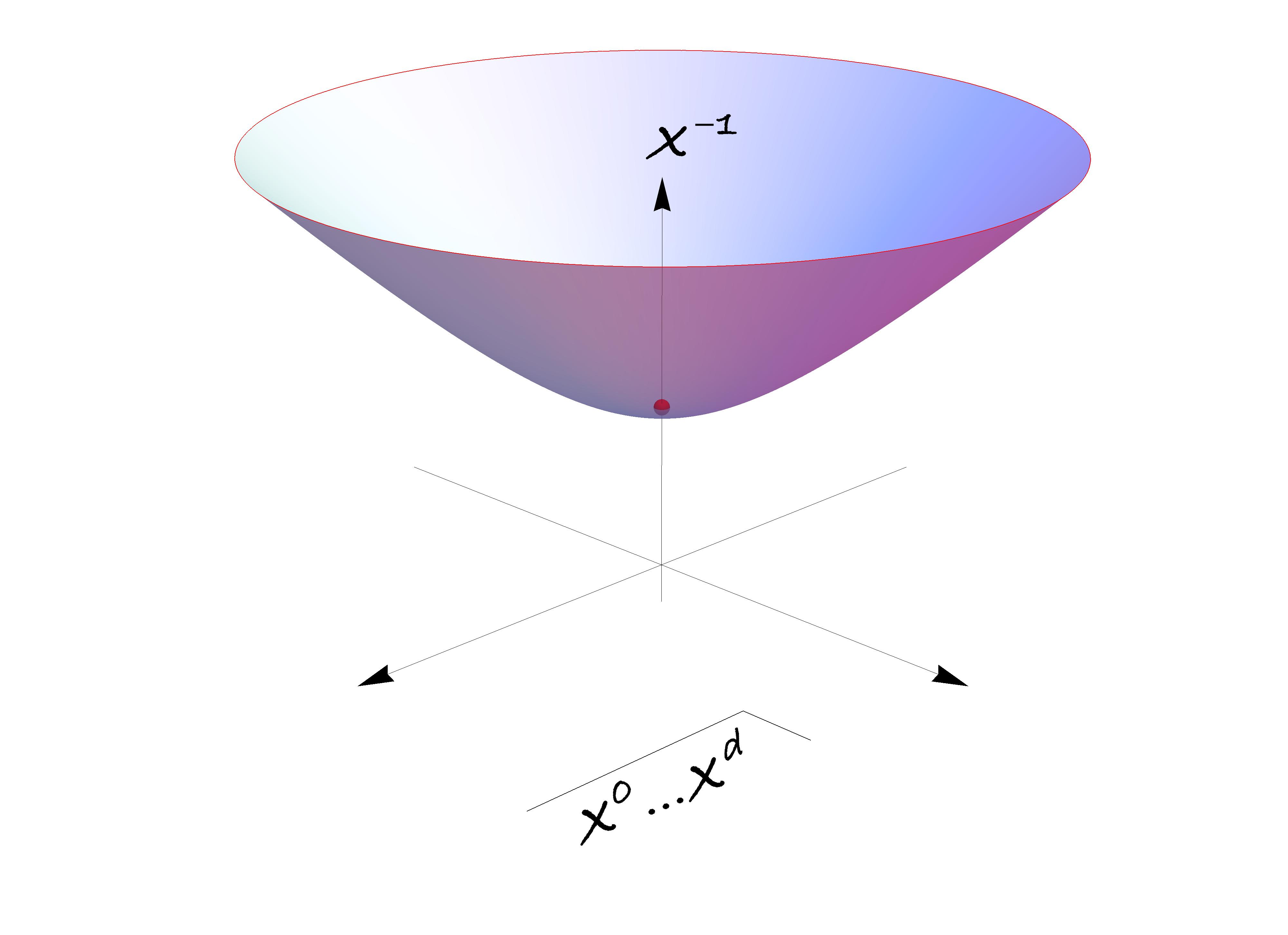}
   \label{fig2:subfig1}
 }
 \subfigure[]{
   \includegraphics[height=5.2 cm] {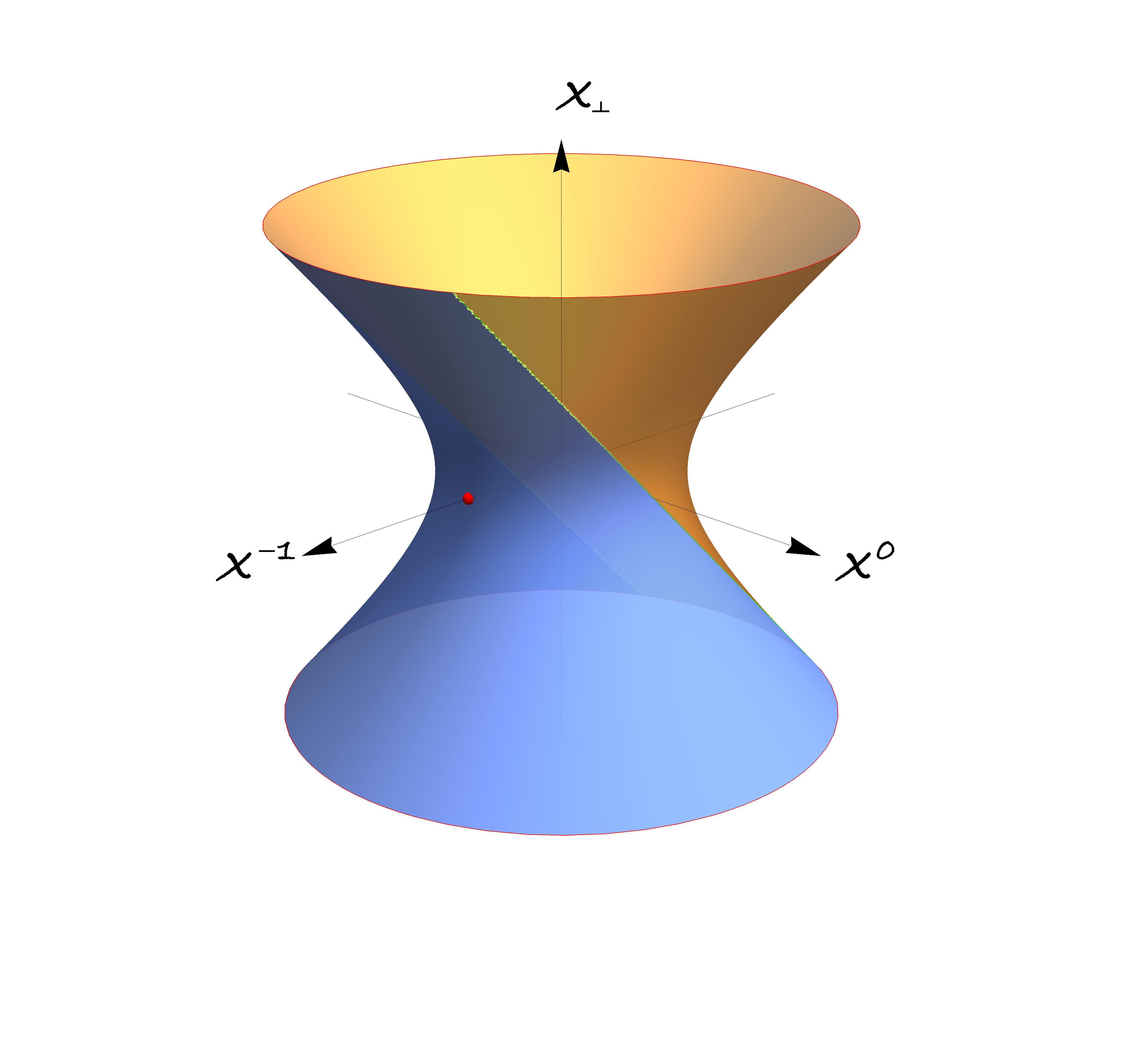}
   \label{fig2:subfig2}
 }
\caption{\label{AdSy}\small (a) $EAdS_{d+1}$ (b) $AdS_{d+1}$ }
\end{center}
\end{figure}

Euclidean Anti-de-Sitter (Lobachevsky) space $EAdS_{d+1}$ is the upper sheet of the  two-sheeted hyperboloid imbedded into the $(d+2)$-dimensional Min\-kow\-ski space, fig.~\ref{fig2:subfig1}. The hyperboloid asymptotes to the future light-cone, and the boundary of Euclidean AdS, the sphere $S^d$ at infinity shown in fig~\ref{fig2:subfig1} as a red circle, can be thought of as a set of all light rays emitted from the origin. 

Minkowski AdS, as defined by eq.~(\ref{embeddingAdS}), is a one-sheeted hyperboloid, fig.~\ref{fig2:subfig2}. The relationship between the space with the Poincar\'e metric (\ref{PoincareAdS}) and the surface (\ref{embeddingAdS}) is more intricate in this case.
The coordinates
 $(x^\mu ,z)$ cover only part of the hyperboloid, the Poincar\'e patch with 
 $X^{-1}>X^d$ (which corresponds to $z>0$). The two patches, shown in blue and yellow in fig.~\ref{fig2:subfig2}, are glued together along the horizon at $z=\infty $.  
 The lateral sections of the hyperboloid are actually time-like, and on the boundary
that has the $S^{d-1}\times S^1$ geometry, the time direction is cyclic. The space which is usually called global $AdS_{d+1}$ is
 the universal cover of the hyperboloid (\ref{embeddingAdS}) obtained by unwinding the time direction. 

The embedding (\ref{embeddingAdS})   defines a transitive action of $SO(d,2)$ on $AdS_{d+1}$ and of $SO(d+1,1)$ on $EAdS_{d+1}$. The stability group of the point shown as a red dot in fig.~\ref{AdSy} is $SO(d+1)$ for $EAdS_{d+1}$ and $SO(d,1)$ for $AdS_{d+1}$. Hence,
\begin{equation}
 EAdS_{d+1}=SO(d+1,1)/SO(d+1),\qquad 
 AdS_{d+1}=SO(d,2)/SO(d,1).
\end{equation}
\end{proof}

\section{Principal Chiral Field}  \label{PCF:sec}

The principal chiral field is a non-linear field that takes values in a group manifold:
$$
 g(x):~\Sigma \rightarrow G.
$$
One can define a Lie-algebra-valued current
\begin{equation}\label{LIcurrent}
 j_a=g^{-1}\partial _a g\in\mathfrak{g},
\end{equation}
and construct the Lagrangian as 
\begin{equation}\label{PCFLagrangian}
 \mathcal{L}=-\mathop{\mathrm{tr}}j_aj^a=-\mathop{\mathrm{tr}}j\wedge *j.
\end{equation}
Here $"\mathop{\mathrm{tr}}"$ denotes the invariant quadratic form on $\mathfrak{g}$. The overall normalization and sign are chosen so as to give conventional kinetic term to $X^A$ in the parameterization $g=\,{\rm e}\,^{iX^AT_A}$, where $T_A$ are Lie algebra generators normalized such that $\mathop{\mathrm{tr}}T_AT_B=\delta _{AB}/2$. Another commonly used normalization includes an extra factor of $1/4$.

The Lagrangian is invariant under global $G_L\times G_R$ transformations
\begin{equation}
 g(x)\rightarrow g_Lg(x),\qquad 
 g(x)\rightarrow g(x)g_R.
\end{equation}
The current is left-invariant and transforms in the adjoint under right multiplications: $j_a\rightarrow g_R^{-1}j_ag_R$. The invariance of the Lagrangian in the latter case follows from the invariance of the quadratic form, denoted by $\mathop{\mathrm{tr}}$, under the group action.

The equations of motion of the principal chiral field follow from an infinitesimal variation
$$
 \delta g=g\xi ,\qquad \xi \in\mathfrak{g}.
$$
The variation of the current is
\begin{equation}\label{varcurrent}
  \delta j_a=\partial _a\xi +[j_a,\xi ].
\end{equation}
The commutator term does not contribute to the variation of the Lagrangian, because of the invariance of the quadratic form under the adjoint action of the Lie algebra. Partial integration in the remaining derivative term yields:
$$
 \delta S=2\int_{}^{}d^2x\,\mathop{\mathrm{tr}}\xi \partial _aj^a.
$$
The equations of motion therefore are:
\begin{equation}\label{conscurrent}
 \partial _aj^a=0,
\end{equation}
and are equivalent to the conservation of current $j_a$.
The current $j_a$  is nothing but the Noether current of the right group multiplication and the equations of motion can be alternatively derived from the Noether theorem.

When written in terms of $g(x)$, the conservation of current is a second-order non-linear differential equation. Boundary conditions, for instance,
\begin{equation}\label{bcs}
 g(x^0,\pm\infty )=1,
\end{equation}
and initial conditions at some fixed time slice lead  to a well-defined Cau\-chy problem. 

However, it is far more convenient to regard the current itself as a dynamical variable. To this end, we notice that (\ref{LIcurrent}) defines a pure-gauge potential, a flat connection, whose curvature is equal to zero:
\begin{equation}\label{flatcurrent}
 \partial _aj_b-\partial _bj_a+[j_a,j_b]=0.
\end{equation}
 Unlike (\ref{conscurrent}), this equation is an identity, a consequence of  definition (\ref{LIcurrent}). Nevertheless we can treat eqs.~(\ref{conscurrent}) and (\ref{flatcurrent}) on equal footing, as two equations of motion for the two components of the  current. Once the current is known, $g(x)$ can be reconstructed from (\ref{LIcurrent}) after imposing the boundary conditions (\ref{bcs}).
 
In the index-free notations, 
\begin{eqnarray}
 &&d*j=0 \label{cons}
 \\
 &&dj+j\wedge j=0. \label{wedge}
\end{eqnarray}
These two equations can be combined into one by the following simple trick.  Let us multiply the first equation by $z$ and add to the second equation: $z(\ref{cons})+(\ref{wedge})$.  We do not loose any information by doing so, because two expressions are null as soon as their arbitrary linear combination is null.  Explicitly,
\begin{equation}\label{prelim}
 d\left(j+z*j\right)+j\wedge j=0.
\end{equation}
Here $z$ is just a dummy variable, an arbitrary complex number for instance.

The  last equation suggest to redefine the current to $j+z*j$. And indeed, using the identities
$$
 *a\wedge b=-a\wedge *b,\qquad *^2=1,
$$
valid for one-forms in two dimensions,
the last term in the equation can be expressed through the new current as well, up to a rescaling factor:
 $$
  \left(j+z*j\right)\wedge\left(j+z*j\right)=\left(1-z^2\right)j\wedge j.
 $$
A simple rescaling factor:
\begin{equation}
 L=\frac{j+z*j}{1-z^2}\,,
\end{equation}
makes the current so defined flat:
\begin{equation}
 dL+L\wedge L=0.
\end{equation}

The current $L$ is called the Lax connection. Explicitly,
\begin{equation}
 L_a=\frac{j_a+z\varepsilon _{ab}j^b}{1-z^2}\,.
\end{equation}
The equations of motion for the principal chiral field are equivalent to the condition that the Lax connection is flat \cite{Zakharov:1973pp}:
\begin{equation}\label{flatc}
 \partial _aL_b-\partial _bL_a+[L_a,L_b]=0~~\forall z.
\end{equation}
The zero curvature, or Lax representation of the equations of motion is a hallmark of integrability.

In virtue of the zero curvature condition, the  holonomy (the Wilson line) of $L$ is invariant under continuous deformations of the contour. The holonomy along a constant time slice defines the monodromy matrix:
\begin{equation}
 M(t;z)=P\exp\int_{-\infty }^{+\infty }dx\,L_1(t,x;z),
\end{equation}
which for generic complex $z$ is an element of the complexified group $G^{\mathbbm{C}}$. Translation of the contour along the time direction is a continuous deformation and therefore leaves the monodromy matrix intact.
Hence, the monodromy matrix is time-independent:
\begin{equation}
 \partial _tM(t;z)=0.
\end{equation}
A quantity that does not change with time, once the equations of motion are imposed, is called the conserved charge. The monodromy matrix defines an infinite number of those because it contains an auxiliary parameter $z$ (called the spectral parameter). The individual conserved charges are generated by expanding the monodromy matrix in $z$, for instance by the Laurent expansion at infinity:
\begin{equation}
 M(z)=\exp\left(\sum_{n=1}^{\infty }\frac{Q_n}{z^n}\right).
\end{equation}

Written explicitly in terms of currents, the monodromy matrix is
\begin{equation}
 M(z)=P\exp\int_{-\infty }^{+\infty }dx\,\,\frac{j_1-zj_0}{1-z^2}\,,
\end{equation}
and to the first order in $1/z$,
\begin{equation}
 M(z)=1+\frac{1}{z}\,\int_{-\infty }^{+\infty }dx\,j_0+\mathcal{O}\left(\frac{1}{z^2}\right).
\end{equation}
The first charge in the hierarchy, $Q_1$, therefore, is just the Noether charge of right group multiplications.

\begin{exc}
Compute the second (Yangian) charge $Q_2$ by expanding the monodromy matrix to the second order in $1/z$.
\end{exc}

\begin{exc}
Derive the Noether current of the left group multiplications. Show that the corresponding charge appears at the first order in the Taylor expansion of the monodromy matrix at $z=0$.
\end{exc}

Higher conserved charges obtained by expanding the monodromy matrix at $z=\infty $ are non-local, they cannot be represented by integrals of local densities. Local conserved quantities are generated by Taylor expansion  at $z=\pm 1$. The conventional energy and momentum of the sigma-model, however,  cannot be obtained from the monodromy matrix. Integrability of the principal chiral field is more consistent with an alternative canonical structure that does not follow from the Lagrangian (\ref{PCFLagrangian}), but whose Hamiltonian and momentum appear as first terms in the expansion of the monodromy matrix at $z=\pm 1$ \cite{Faddeev:1985qu}.

The zero-curvature representation  has far-reaching consequences. It can be used to explicitly construct solutions of the equations of motion, or to find the action-angle variables, in which the dynamics is trivial. The counterpart of the Lax representation in quantum theory is the Algebraic Bethe Ansatz \cite{Faddeev:1996iy}, which defines the building blocks of the quantum spectrum. The usual issues of UV regularization and definition of the true ground state are very noexemplifyn-trivial in integrable QFTs,  and it is often easier to reconstruct  the final answer from the constraints imposed by integrability on the dynamics, rather than to solve the model directly. Unfortunately, many steps of this bootstrap procedure are model-dependent and may also depend on the boundary conditions. We exemplify the integrability bootstrap in sec.~\ref{ONexact:sec} by the exact solution  of the $O(N)$ sigma-model. 

\section{Symmetric cosets}\label{Symmcos:sec}

\begin{figure}[t]
\begin{center}
 \centerline{\includegraphics[width=8cm]{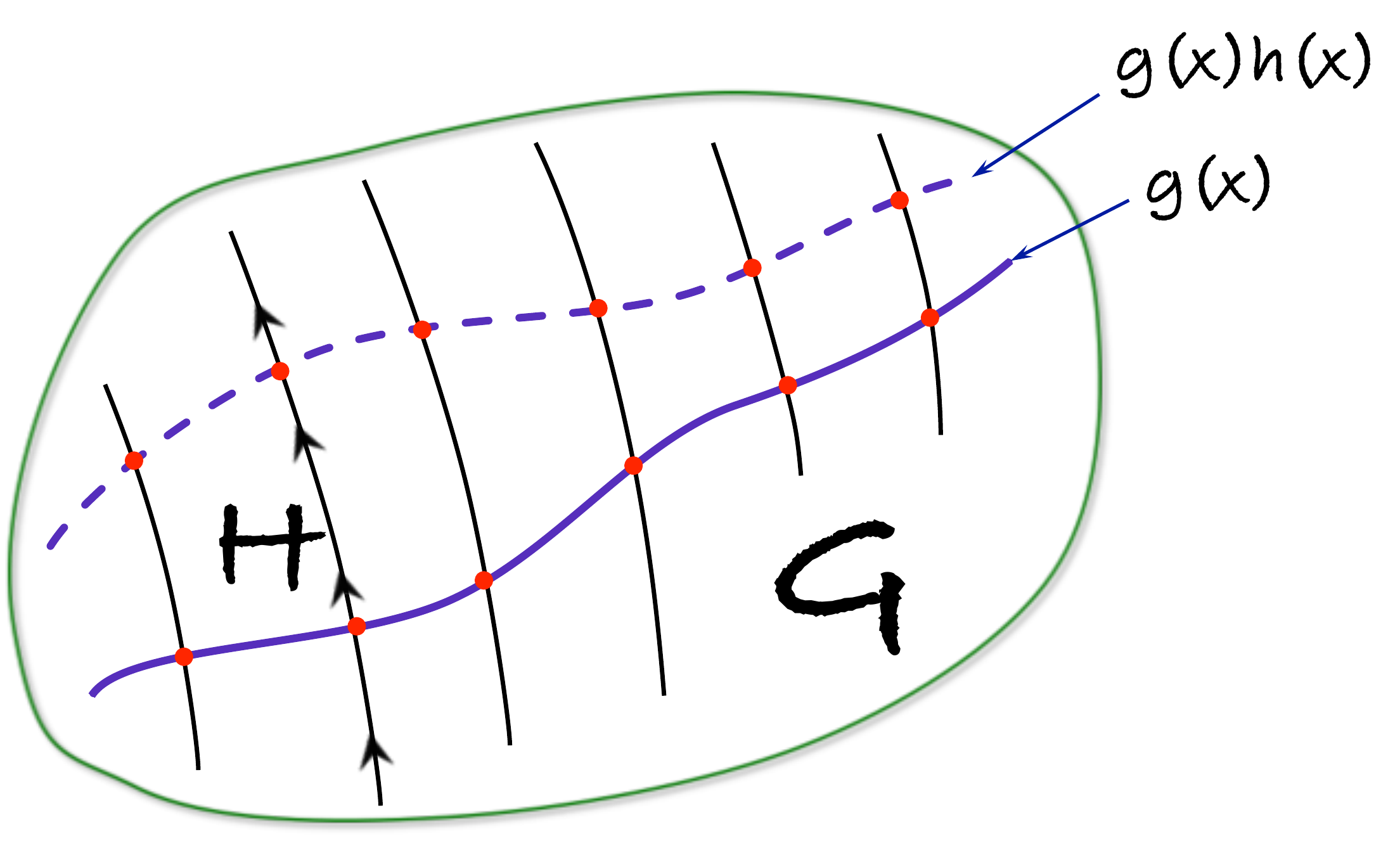}}
\caption{\label{gauge}\small Locally, a coset can be viewed as a section of the group manifold transverse to the orbits of $H$.
Two sections  that differ by multiplication with elements in $H$ are equivalent.}
\end{center}
\end{figure}

Sigma-models on homogeneous spaces are also integrable, provided that the 
target-space is {\it symmetric}, an algebraic property described below. The action of a sigma-model can be always written as  (\ref{Lagrangian-sigma-general}) in an explicit coordinate system, but for homogeneous spaces an elegant invariant construction based on the coset representation \cite{Callan:1969sn} is  in many ways more convenient. Points in a coset  are orbits of $H$. Picking one representative per orbit  introduces an explicit coordinate system. This defines a section of the group manifold that is transverse to the orbits of $H$ at any point\footnote{Such a section may not be globally defined.}. A field configuration then is a map from space-time to the group manifold which is restricted to this section (fig.~\ref{gauge}).  But any other choice differing by right multiplication with elements in $H$ is equivalent, and it is desirable to have a formulation covariant with respect to the action of $H$. In other words, any map $g(x)$ should be allowed, but the field configurations $g(x)$ and $g(x)h(x)$ with $h(x)\in H$ should be physically indistinguishable. This can be achieved by realizing the right action of $H$ as a gauge symmetry:
\begin{equation}\label{locgauge}
 g(x)\rightarrow g(x)h(x),\qquad h(x)\in H.
\end{equation}
For this construction to work the action of the sigma-model has to be gauge-invariant. 

The Lagrangian (\ref{PCFLagrangian}) fails to satisfy this condition. Indeed, the current
(\ref{LIcurrent}) transforms as 
\begin{equation}
 j_a\rightarrow h^{-1}j_ah+h^{-1}\partial _ah,
\end{equation}
and the inhomogeneous term does not cancel in the gauge variation of the action. 

The inhomogeneous term in the variation of the current can be projected out by decomposing the Lie algebra $\mathfrak{g}$ of $G$ into the denominator subalgebra $\mathfrak{h}$ and its orthogonal complement $\mathfrak{f}$ with respect to the invariant quadratic form:
\begin{equation}\label{g=h+f}
 \mathfrak{g}=\mathfrak{h}\oplus\mathfrak{f}.
\end{equation}
The current then expands in two components:
\begin{equation}
 j_a= A_a+K_a,\qquad A_a\in\mathfrak{h},~K_a\in\mathfrak{f}.
\end{equation}
Since $h\in H$, and $h^{-1}\partial _ah\in\mathfrak{h}$,
it is $A_a$ that transforms as a gauge field:
\begin{equation}
 A_a\rightarrow h^{-1}A_ah+h^{-1}\partial _ah,
\end{equation}
while the coset component transforms as a matter field in the adjoint:
\begin{equation}
 K_a\rightarrow h^{-1}K_ah.
\end{equation}

The simplest gauge-invariant Lagrangian built from these fields is
\begin{equation}
 \mathcal{L}=-\mathop{\mathrm{tr}}K_aK^a.
\end{equation}
It is manifestly invariant under the local gauge transformations (\ref{locgauge}) and is also symmetric under global left multiplications $g(x)\rightarrow g_Lg(x)$, $g_L\in G$, because the current is left invariant from the very beginning. The model  possesses all the requisite symmetries of the coset space $G/H$.

In order to make the discussion less abstract let us illustrate the coset construction by a simple example of the Hopf fibration $S^3\rightarrow S^2$:

\begin{expl}
$S^2=SU(2)/U(1)$
\end{expl}
\begin{proof}
The generators of $\mathfrak{su}(2)$ (the Pauli matrices) are split into the coset generators and the denominator subalgebra as
\begin{equation}\label{Euler}
  \mathfrak{h}=<\sigma _1>,\qquad \mathfrak{f}=<\sigma _2,\sigma _3>.
\end{equation}
Generic element of $SU(2)$ can be parameterized by three Euler angles:
\begin{equation*}
  g=\,{\rm e}\,^{\frac{i\varphi \sigma _1}{2}}\,{\rm e}\,^{\frac{i\theta \sigma _3}{2}}
  \cancelto{\in H}{\,{\rm e}\,^{\frac{i\psi \sigma _1}{2}}}.
\end{equation*}
The last factor belongs to the coset denominator and can be removed by a gauge transformation. We can just drop it altogether, which is equivalent to fixing a gauge. For the current we then have:
$$
 g^{-1}dg=\,{\rm e}\,^{-\frac{i\theta \sigma _3}{2}}\left(d+\frac{i}{2}\,d\varphi \sigma _1\right)\,{\rm e}\,^{\frac{i\theta \sigma _3}{2}}=\frac{i}{2}\left(\underbrace{d\theta \sigma _3+d\varphi \sin\theta \sigma _2}_{K}+\underbrace{d\varphi \cos\theta \sigma _1}_{A}\right).
$$
The resulting line element is
$$
 ds^2=-2\mathop{\mathrm{tr}}K^2=
 \frac{1}{2}\,\mathop{\mathrm{tr}}\left(d\theta \sigma _3+d\varphi \sin\theta \sigma _2\right)^2
 =d\theta ^2+\sin^2\theta \,d\varphi ^2,
$$
which is the familiar metric of the sphere.
\end{proof}

Returning to the general case, we can re-write the variation of the current (\ref{varcurrent}) as
\begin{equation}\label{decvar}
  \delta j_a=D_a\xi +[K_a,\xi ],
\end{equation}
where $D_a$ is the standard covariant derivative: 
\begin{equation}
 D_a=\partial _a+[A_a,\cdot].
\end{equation}
The ensuing variation of the Lagrangian is
$$
 \delta \mathcal{L}=-2\mathop{\mathrm{tr}}\delta K_aK^a
 =-2\mathop{\mathrm{tr}}\delta j_aK^a
 =-2\mathop{\mathrm{tr}}D_a\xi K^a,
$$
where the first equality follows from orthogonality of $\delta A_a\in \mathfrak{h}$ and $K^a\in\mathfrak{f}$, and the commutator term in (\ref{decvar}) was dropped because of the  invariance of the quadratic form.  The resulting equations of motion are
\begin{equation}\label{DK}
 D_aK^a=0.
\end{equation}
Together with the flatness condition (\ref{flatcurrent}), they form a complete set of equations for the current components.
The flatness condition, when expressed in terms of $K_a$ and $A_a$, takes the form:
\begin{equation}\label{F_ab}
 F_{ab}+D_aK_b-D_bK_a+[K_a,K_b]=0,
\end{equation}
where $F_{ab}=\partial _aA_b-\partial _bA_a+[A_a,A_b]$ is the field strength of the gauge connection. 

Before proceeding further let us pause for a few general remarks on  the coset decomposition (\ref{g=h+f}). It is orthogonal with respect to the invariant scalar product on $\mathfrak{g}$, but what about the commutation relations? How are they affected by the coset decomposition?
Since $\mathfrak{h}$ is a subalgebra, the commutator of two elements of $\mathfrak{h}$ is again an element of $\mathfrak{h}$:
\begin{equation}\label{hhf}
 [\mathfrak{h},\mathfrak{h}]\subset\mathfrak{h}.
\end{equation}
In terms of the structure constraints this means that $f_{\mathfrak{h}\mathfrak{h}\mathfrak{f}}=0$. Anti-symmetry of the structure constants (a very general property that follows from the existence of a non-degenerate invariant quadratic form) implies that $f_{\mathfrak{h}\mathfrak{f}\mathfrak{h}}=0$. Hence,
\begin{equation}\label{hff}
 [\mathfrak{h},\mathfrak{f}]\subset\mathfrak{f}.
\end{equation}

What can be said about $[\mathfrak{f},\mathfrak{f}]$?
Since $f_{\mathfrak{f}\mathfrak{f}\mathfrak{h}}$ are related to $f_{\mathfrak{h}\mathfrak{f}\mathfrak{f}}$, they are different from zero. The structure constants $f_{\mathfrak{f}\mathfrak{f}\mathfrak{f}}$ are not restricted by any general principles. An interesting special cases arises if they also vanish. Then
\begin{equation}\label{ffh}
 [\mathfrak{f},\mathfrak{f}]\subset\mathfrak{h}.
\end{equation}
If the Lie algebra decomposition obeys this constraint, the coset space is called symmetric.
The commutation relations (\ref{hhf}), (\ref{hff}) and (\ref{ffh}) follow a systematic pattern in this case. One can formalize it by introducing 
 a  $\mathbbm{Z}_2$ transformation acting on the Lie algebra $\mathfrak{g}$ as
\begin{equation}\label{Z2}
 \Omega (\mathfrak{h})=\mathfrak{h}, \qquad 
 \Omega (\mathfrak{f})=-\mathfrak{f}.
\end{equation}
This operation constitutes a symmetry of  $\mathfrak{g}$ once (\ref{ffh}) holds. Compatibility of the $\mathbbm{Z}_2$ symmetry (\ref{Z2}) with the commutation relations of $\mathfrak{g}$ can be taken as a definition of symmetric homogeneous space.

One can invert the logic and think of a symmetric space as being defined by 
 the symmetry group $G$ and a $\mathbbm{Z}_2$ automorphism of its Lie algebra (a linear map $\Omega :\mathfrak{g}\rightarrow \mathfrak{g}$ that preserves the commutation relations and squares to the identity: $\Omega ^2={\rm id}$). The denominator of the coset $\mathfrak{h}$ is then identified with the invariant subalgebra of $\Omega $. The $\mathbbm{Z}_2$ automorphism of the Lie algebra translates into a geometric reflection symmetry of the homogeneous space, as illustrated by the following example.

\begin{expl}
$S^2=SU(2)/U(1)$ is a symmetric space.
\end{expl}
\begin{proof}
 The coset decomposition (\ref{Euler}) is $\mathbbm{Z}_2$ invariant because $[\sigma _2,\sigma _3]=2i\sigma _1\in\mathfrak{h}$. Since $\sigma ^3$ pairs with $\theta $ in the Euler parameterization,
 the $\mathbbm{Z}_2$ symmetry of the algebra translates to the invariance of the sphere under $\theta \rightarrow -\theta $.
\end{proof}

\begin{exc}
 Check that $S^n$, $\mathbbm{CP}^n$ and $(E)AdS_{d+1}$ are symmetric spaces.
\end{exc}

\begin{exc}
The conformal algebra $\mathfrak{so}(d,2)$ is generated by Lorentz transformations $L_{\mu \nu }$, translations $P_\mu $, dilatation $D$ and special conformal transformations $K_\mu $. Its commutation relations are
\begin{eqnarray}\label{coalg}
 &&[L_{\mu \nu },L_{\lambda \rho }]=\eta _{\mu \lambda }L_{\nu \rho }+\eta
 _{\nu \rho }L_{\mu \lambda }-\eta _{\mu \rho }L_{\nu \lambda }-\eta
 _{\nu \lambda }L_{\mu \rho },
 \nonumber \\
&& {}[L_{\mu \nu },P_\lambda ]=\eta _{\mu \lambda }P_\nu -\eta _{\nu \lambda}
 P_\mu, \qquad 
{}[   L_{\mu \nu },K_\lambda ]=\eta _{\mu \lambda }K_\nu -\eta _{\nu
 \lambda }K_\mu,
\nonumber \\
 &&{}[ D,P_\mu ]=P_\mu,\qquad 
 {}[ D,K_\mu ]=-K_\mu,\qquad 
 {}[ P_\mu ,K_\nu ]=2L_{\mu \nu }-2\eta _{\mu \nu }D.
\end{eqnarray}
The Killing metric on $\mathfrak{so}(d,2)$ is
\begin{equation}
 \mathop{\mathrm{tr}}L_{\mu \nu }L_{\lambda \rho }=\eta _{\mu [\lambda  }\eta _{\nu \rho ]},
 \qquad 
 \mathop{\mathrm{tr}}D^2=-1,
 \qquad 
 \mathop{\mathrm{tr}}P_\mu K_\nu =-2\eta _{\mu \nu }.
\end{equation}

Consider a $\mathbbm{Z}_2$ transformation: 
\begin{equation}
 \Omega (L_{\mu \nu })=L_{\mu \nu },
 \qquad 
 \Omega (D)=-D,
 \qquad 
 \Omega (K_\mu )=P_\mu ,
 \qquad 
 \Omega (P_\mu )=K_\mu .
\end{equation}
Check that $\Omega $ is consistent with the commutation relations of $\mathfrak{so}(4,2)$, and therefore constitutes an automorphism of the algebra. Show that the invariant subalgebra of $\Omega $ is $\mathfrak{so}(d,1)$, and the symmetric coset defined by $\Omega $ is $AdS_{d+1}$.

Take the coset representative to be
\begin{equation}
 g(x,z)=\,{\rm e}\,^{iP_\mu x^\mu }z^{-D},
\end{equation}
and derive the metric in the $(x^\mu ,z)$ coordinates from the coset construction.
\end{exc}

A remarkable fact about symmetric-space sigma models is their complete integrability \cite{Eichenherr:1979ci}. While there coset sigma-models which are not symmetric but still integrable \cite{Bykov:2016rdv}, integrability of symmetric cosets can be demonstrated in a uniform, essentially algebraic way. 

Here is the key point, that singles out symmetric spaces.  The first two terms in (\ref{F_ab}) belong to $\mathfrak{h}$ and $\mathfrak{f}$, respectively, because of (\ref{hhf}) and (\ref{hff}), while the third term in general has both $\mathfrak{f}$ and $\mathfrak{h}$ components. But for symmetric cosets  all four terms in the flatness condition (\ref{F_ab}) have definite $\mathbbm{Z}_2$ grading and the equation can therefore be neatly projected onto $\mathfrak{h}$ and $\mathfrak{f}$ subspaces of the coset decomposition:
\begin{eqnarray}\label{symmcos-flat}
&& F_{ab}+[K_a,K_b]=0
\nonumber \\
&& D_aK_b-D_bK_a=0.
\end{eqnarray}
These equations, together with the equation of motion (\ref{DK}) follow from the flatness condition for the Lax connection:
\begin{equation}\label{Laxsymmc}
 L_a=A_a+\frac{z^2+1}{z^2-1}\,K_a-\frac{2z}{z^2-1}\,\varepsilon _{ab}K^b.
\end{equation}

\begin{exc}
 Show that the zero-curvature condition (\ref{flatc}) for  (\ref{Laxsymmc}) is equivalent to the equations of motion (\ref{DK}), (\ref{symmcos-flat}).
\end{exc}

\section{B-field and topology}\label{B-field:sec}

So far we have discussed the metric term in the sigma-model Lagrangian (\ref{Lagrangian-sigma-general}), and showed how to construct it for manifolds that admit an action of a symmetry group $G$. Is it also possible to construct an invariant B-field? To better formulate the question, it is convenient to switch to the form notations:
\begin{equation}
 \frac{1}{2}\int_{\Sigma }^{}d^2x\,\varepsilon ^{ab}B_{MN}\partial _aX^M\partial _bX^N=\int_{\Sigma }^{}B=\int_{D}^{}H,
\end{equation}
where
\begin{equation}
B=B_{MN}dX^M\wedge dX^N,~
 H=dB= H_{MNL}dX^M\wedge dX^N\wedge dX^L,~
 \label{closedH}
 H_{MNL}=\partial _{[M}B_{NL]},
\end{equation}
and $D$ is a fiducial three-dimensional space whose boundary is $\Sigma $: $\partial D=\Sigma $. 

Re-writing a 2d integral in a  3d form may look superficial, but it helps to uncover interesting possibilities that would be overlooked otherwise. The three-form $H$, as defined in (\ref{closedH}), is exact. Interestingly, this requirement can be relaxed and replaced by a weaker condition that   the form is closed ($dH=0$). The three-dimensional form of the B-field coupling cannot  be then  transformed back to a two-dimensional integral of an invariant two-form. Yet it defines a consistent local coupling in two-dimensional field theory \cite{Novikov:1982ei,Witten:1983tw}.

If $\Sigma =S^2$, then $D$ is the three-dimensional ball. Field configurations on $S^2$  can be unfolded to $\mathbbm{R}^2$ by stereographic projection and, conversely, field configurations on $\mathbbm{R}^2$ sufficiently well behaved at infinity can be compactified on $S^2$, so the ensuing construction applies to conventional field theories on the flat 2d space under assumption of regularity conditions at infinity. 

The B-field coupling in the sigma-model action can thus be constructed from an invariant two-form on the target space or from an invariant closed three-form. Both cases lead to topological terms in the sigma-model action.
We consider them in turn. A more comprehensive introduction to topological terms in QFT can be found in \cite{Abanov:2017zok}.

\subsection{Theta term}

If the coset denominator contains an Abelian factor, $H\ni U(1)$, the corresponding field strength $F_{ab}$ is a gauge-invariant two-form. The coset sigma-model on $G/H$ then admits a theta-term:
\begin{equation}
 S_\theta =\frac{\theta }{4\pi }\int_{}^{}d^2x\,\varepsilon ^{ab}F_{ab}
 =\frac{\theta }{2\pi }\int_{}^{}F.
\end{equation}
The theta-term is topological, it only depends on the values of the fields at infinity:
$$
 \frac{1}{2\pi }\int_{}^{}F=\frac{1 }{2\pi }\oint_\infty A.
$$
Since the field must approach pure gauge  at $x\rightarrow \infty $: $A=d\varphi $, 
$$
 \frac{1 }{2\pi }\oint_\infty A=\frac{1 }{2\pi }\oint_\infty d\varphi 
 =\frac{\Delta \varphi }{2\pi }=n.
$$

The theta-term is non-zero only on topologically non-trivial field configurations (instantons) and measures their topological charge. As such it does not contribute to the equations of motion. In the path integral the theta-term produces a phase factor $\,{\rm e}\,^{iS_\theta }=\,{\rm e}\,^{i\theta n}$. Since $n$ is an integer, the theta-angle is a periodic variable: the path integral does not change under $\theta \rightarrow \theta +2\pi $.

Examples of sigma-models that admit a theta-term, and have instantons, are the $\mathbbm{CP}^n$ models, for any $n$. These include the $O(3)$ model (because  $S^2=\mathbbm{CP}^1$), while $O(n)$ models with $n>3$ do not admit a theta term and do not have instantons. 

The theta-term measures the winding of pure gauge field at infinity and is associated with the homotopy group\footnote{The definition and properties of homotopy groups can be found, for instance, in \cite{Dubrovin-book}.} $\pi _1(H)=\mathbbm{Z}$. If $\pi _2(G/H)=\mathbbm{Z}$, the degree of mapping $\Sigma \rightarrow G/H$ also defines a topological charge, but one can show on very general grounds that $\pi _2(G/H)=\pi _1(H)$ and the degree of mapping  actually coincides with the winding of the pure gauge  at infinity, so we do not get anything new. It is instructive to explicitly work out the equivalence of the two definitions of the topological charge for the $O(3)$ model.

\begin{exc}
 The $O(3)$ sigma model can be realized as the $SU(2)/U(1)=S^2$ coset, or as a field theory of three-dimensional unit vector  $\mathbf{n}$,  $\mathbf{n}^2=1$. The degree of  mapping defined by $\mathbf{n}$ (the number of times the target-space $S^2$ wraps $\Sigma$) is given by the integral
 $$
  n=\frac{1}{8\pi }\int_{ }^{}d^2x\,\varepsilon ^{ab}\mathbf{n}\cdot \partial _a\mathbf{n}\times \partial _b\mathbf{n}.
 $$
 Prove that this number coincides with the instanton charge in the coset formulation, perhaps up to a sign. (Hint: use $F=-K\wedge K$, which is an identity, not an equation of motion).
 
 Instanton solutions of arbitrary topological charge in the $O(3)$ model have been constructed in \cite{Polyakov:1975yp}. An interplay between instantons and integrability in this and other two-dimensional models remains an interesting open problem. 
\end{exc}

\subsection{Wess-Zumino term}

\begin{figure}[t]
\begin{center}
 \subfigure[]{
   \includegraphics[height=5.2 cm] {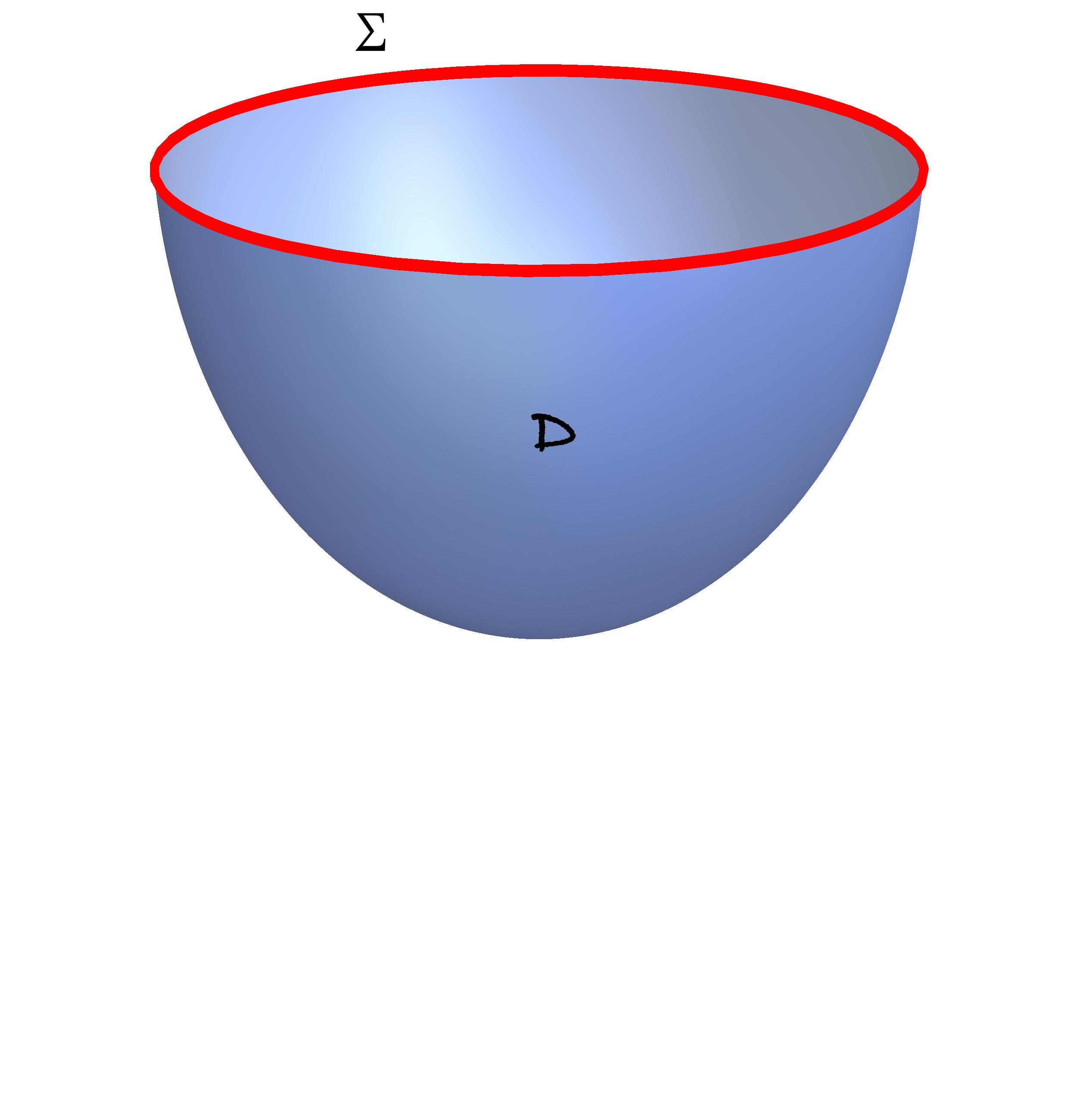}
   \label{fig3:subfig1}
 }
 \subfigure[]{
   \includegraphics[height=5.2 cm] {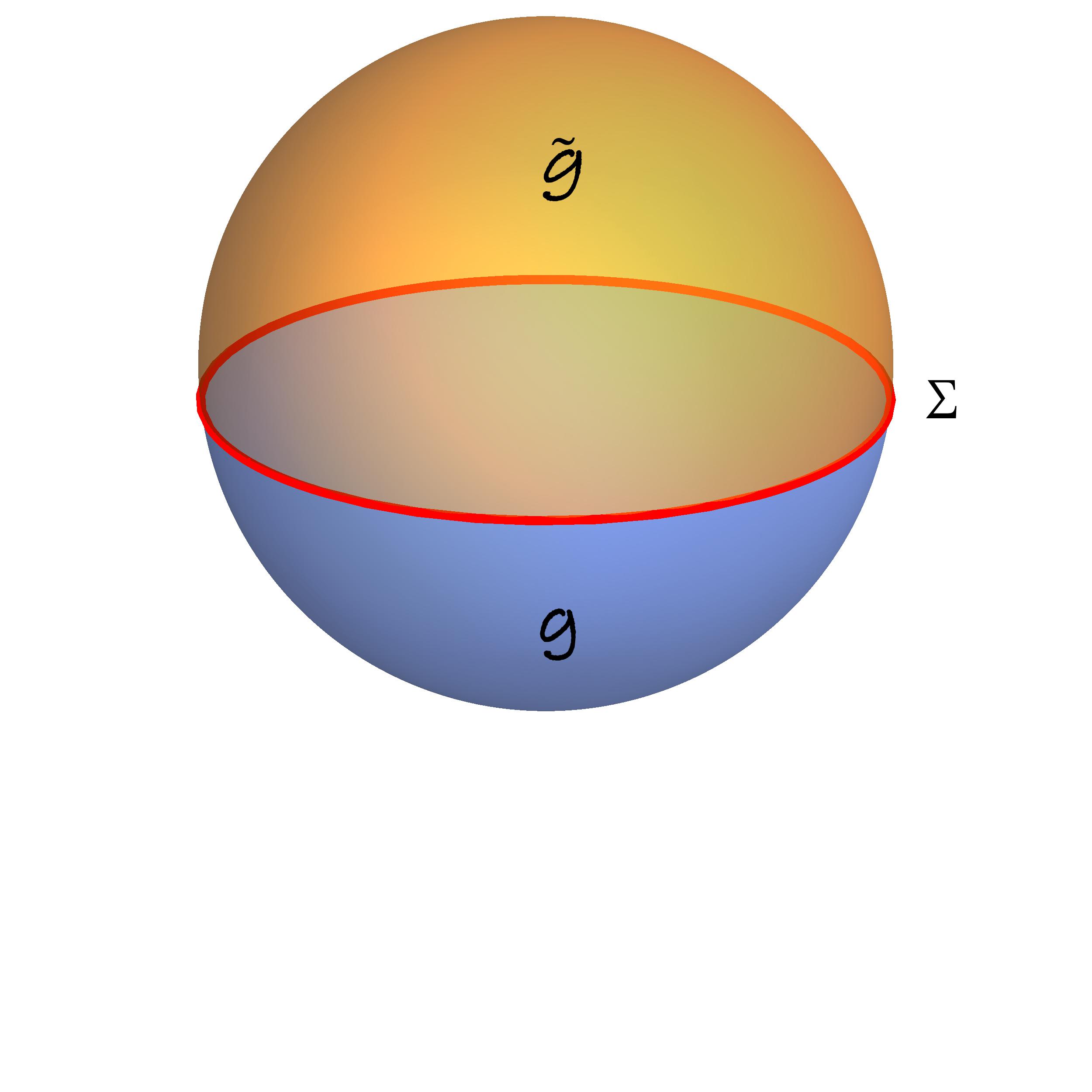}
   \label{fig3:subfig2}
 }
\caption{\label{WZy}\small (a) The space-time manifold $\Sigma $  is the boundary of the integration domain $D$ in the WZ term. (b) Two different continuations of the field variable in the interior of $D$ can be combined into a map $\hat{g}:\,S^3\rightarrow G$.}
\end{center}
\end{figure}

An invariant three-form on a group manifold can be obtained by taking the wedge product of three currents. This defines the Wess-Zumino (WZ) term that can be added to the action of  the principal chiral field sigma-model  \cite{Witten:1983ar}:
\begin{equation}
 S_{\rm WZ}=\frac{1}{3}\int_{D}^{}\mathop{\mathrm{tr}}j\wedge j\wedge j.
\end{equation}
The boundary of the integration domain is the physical 2d space-time: $\partial D=\Sigma $, fig.~\ref{fig3:subfig1}. The definition tacitly assumes that the field variable $g(x)$ is continued into the interior of $D$ in some way. This is always possible to do, since $\pi _2(G)=0$ for any Lie group $G$ and consequently any map $g:\,S^2\rightarrow G$ is contractible. The process of contraction defines a  map  $D\rightarrow G$. 

If considered as part of the action functional in a 2d field theory, the WZ term should be independent of the way the field variable is continued inside $D$.
Let us see if this is indeed the case \cite{Witten:1983tw,Witten:1983ar}, by considering the WZ term evaluated on two different maps $g:\, D\rightarrow G$ and $\tilde{g}:\,D\rightarrow G$, which coincide on the boundary, $\left.g\right|_\Sigma =\left.\tilde{g}\right|_\Sigma$ and thus correspond to one and the same field configuration in space-time. Consider first infinitesimally close  maps: $\tilde{g}=g(1+\xi )$, where $\xi \in\mathfrak{g}$ is small. Substituting the variation of the current from (\ref{varcurrent})  in the WZ functional we find:
\begin{equation}
 \delta S_{\rm WZ}=\int_{D}^{}d\xi \wedge j\wedge j=
 -\int_{D}^{}\xi \,d\left(j\wedge j\right).
\end{equation}
The boundary term can be omitted upon integration by parts because $\xi $ vanishes on $\Sigma $. The flatness condition (\ref{wedge}) then gives
$$
 d\left(j\wedge j\right)=dj\wedge j-j\wedge dj=-j\wedge j\wedge j+j\wedge j\wedge j=0,
$$
so the variation of the WZ action  indeed vanishes.

Continuous deformations of the map $g$ do not change the WZ term, but can any two maps from $D$ to the group manifold be  continuously connected?  Consider two such maps $g$ and $\tilde{g}$ and combine them into a map $\hat{g}:\,S^3\rightarrow G$ defined as shown in fig.~\ref{fig3:subfig2}:
$$
 \hat{g} (x)=
\begin{cases}
 \tilde{g}(x) & x\in{\rm Upper~Hemisphere}
\\
  g(x) & x\in{\rm Lower~Hemisphere}.
\end{cases}
$$
Then
\begin{equation}\label{deltaSWZ}
  S_{\rm WZ}[\tilde{g}]-S_{\rm WZ}[g]=\frac{1}{3}\int_{S^3}^{}
  \mathop{\mathrm{tr}}\hat{j}\wedge\hat{j}\wedge
\hat{j}
=16\pi ^2in.
\end{equation}
The number $n$ does not change under continuous deformations of the map $\hat{g}$ and therefore is a topological invariant. Under appropriate normalization of the quadratic form on the Lie algebra this number is an integer that characterizes an element of $\pi _3(G)=\mathbbm{Z}$.

\begin{exc}
For $G=SU(2)=S^3$, $\hat{g}$ defines a map $S^3\rightarrow S^3$. Show that $n$ given by the integral (\ref{deltaSWZ}) is the degree of this map, the number of times $\hat{g}(x)$ wraps the target $S^3$ while $x$ goes around the sphere.
\end{exc}

The WZ action thus is a multi-valued functional. Depending on how the field variable $g$ is continued inside the three-dimensional domain of integration the result may change by an integer:
\begin{equation}
 S_{\rm WZ}\sim S_{\rm WZ}+16\pi ^2in.
\end{equation}
This is not a problem in classical field theory, because the equations of motion only depend on the variation of the action, and the variation of the WZ action is single-valued. But in quantum theory the action enters the path integral through the phase factor
$$
 \,{\rm e}\,^{\frac{k}{8\pi }\,S_{\rm WZ}}\in{\rm path~integral},
$$
where $k$ is the coupling constant (called level). This factor will be single-valued provided that the coupling is quantized:
\begin{equation}
 k\in\mathbbm{Z}.
\end{equation}

The principal chiral field with the WZ term, the Wess-Zumino-Witten (WZW) model, is defined by the action \cite{Witten:1983ar}
\begin{equation}\label{WZW}
 S_{\rm WZW}=-\frac{1}{\kappa ^2}\int_{\Sigma }^{}\mathop{\mathrm{tr}}j\wedge *j
 +\frac{ik}{24\pi ^2}\int_{D}^{}\mathop{\mathrm{tr}}j\wedge j\wedge j.
\end{equation}

\begin{exc}
Derive the equations of motion for the WZW model and show that they admit a Lax representation \cite{TakhtajanVeselov84}. What happens at $\kappa ^2=2\pi /k$?
\end{exc}

\section{Quantum sigma-models}\label{QSM:sec}

A quantum coset sigma-model is defined by the path integral (in Euclidean signature)
\begin{equation}
 Z=\int_{}^{}\mathcal{D}g\,\,{\rm e}\,^{\frac{1}{\kappa ^2}\int_{}^{}
 d^2x\,\mathop{\mathrm{tr}}K_aK^a}.
\end{equation}
This is a highly nonlinear quantum field theory with a single dimensionless coupling. To get a first glance at the properties of this QFT we consider its response to a macroscopic external force, assuming momentarily that residual quantum effects can be treated perturbatively.

To this end we separate the integration variable in the path integral into the classical background field $\bar{g}$, which models the external force and which we assume to be a solution of the classical equations of motion, and quantum field $X$ whose fluctuations are assumed to be small:
\begin{equation}\label{backgroundexp}
 g=\bar{g}\,{\rm e}\,^{X}.
\end{equation}
The field $X$ belongs to the Lie algebra of $G$. An appropriate gauge transformation (\ref{locgauge}) can eliminate the $\mathfrak{h}$ component of $X$:
\begin{equation}
 X\in\mathfrak{f}.
\end{equation}
The restriction to the coset subspace is a convenient gauge-fixing condition.

To develop perturbation theory we need to expand the current  (\ref{LIcurrent}) in powers of $X$. We start by deriving a general formula for such an expansion. Substituting (\ref{backgroundexp}) into the definition of the current we get:
\begin{equation}\label{jexp}
 j_a=\,{\rm e}\,^{-X}\left(\partial _a+\bar{j}_a\right)\,{\rm e}\,^{X}
 =\bar{j}_a+\int_{0}^{1}ds\,\,{\rm e}\,^{-sX}\mathcal{D}_aX\,{\rm e}\,^{sX},
\end{equation}
where $\bar{j}_a=\bar{g}^{-1}\partial _a\bar{g}$ is the background current and $\mathcal{D}_a=\partial _a+[\bar{j}_a,\cdot ]$. The last term expands in multiple commutators of $X$. To write the series in a compact form, we commute  both sides of (\ref{jexp}) with $X$, which eliminates the integral over $s$:
\begin{equation}
 [X,j_a-\bar{j}_a]=-\int_{0}^{1}ds\,\,\frac{d}{ds}\,\,{\rm e}\,^{-sX}\mathcal{D}_aX\,{\rm e}\,^{sX}
 =\mathcal{D}_aX-\,{\rm e}\,^{-X}\mathcal{D}_aX\,{\rm e}\,^{X}.
\end{equation}
Introducing the notation
\begin{equation}
 \mathop{\mathrm{ad}}X\cdot Y=[X,Y],
\end{equation}
and taking into account that
$$
 \,{\rm e}\,^{-X}Y\,{\rm e}\,^{X}=\,{\rm e}\,^{-\mathop{\mathrm{ad}}X}Y,
$$
we find:
\begin{equation}
 j_a=\bar{j}_a+\frac{1-\,{\rm e}\,^{-\mathop{\mathrm{ad}}X}}{\mathop{\mathrm{ad}}X}\,\mathcal{D}_aX.
\end{equation}
Using this formula the current can be expanded to any desired order in $X$.

After rescaling $X\rightarrow \kappa X$ the first two orders become
\begin{eqnarray}
  j_a&\simeq &\bar{j}_a+\kappa \mathcal{D}_aX-\frac{\kappa ^2}{2}\,[X,\mathcal{D}_aX]
  \nonumber \\
  &=&{\color{red}\bar{A}_a}+{\color{blue}\bar{K}_a}
  +\kappa \left({\color{blue}D_aX}+{\color{red}[\bar{K}_a,X]}\right)
  -\frac{\kappa ^2}{2}\left({\color{red}[X,D_aX]}+{\color{blue}[X,[\bar{K}_a,X]]}\right),
\end{eqnarray}
where $D_a$ is the covariant derivative with respect to the background gauge field. The coloring in the second line refers to the coset decomposition (\ref{g=h+f}). The coset component ($K_a$) is highlighted in blue. We have used the $\mathbbm{Z}_2$ grading to assign commutators to either $\mathfrak{h}$ or $\mathfrak{f}$, and thus  already here assume that the coset is a symmetric space.

The action  of the sigma-model, to the leading order in $\kappa $, is
\begin{equation}
 S=S_{\rm cl}-\int_{}^{}d^2x\,\mathop{\mathrm{tr}}\left(
 D_aXD^aX-[\bar{K}_a,X][\bar{K}^a,X]
 \right)+\mathcal{O}\left(\kappa \right).
\end{equation}
Gaussian integration over $X$ then yields:
\begin{equation}
 Z\simeq \,{\rm e}\,^{-S_{\rm cl}}\det\nolimits^{-\frac{1}{2}}\left(-D^2+\mathop{\mathrm{ad}}\bar{K}^2\right).
\end{equation}
The result of these manipulations is a one-loop effective action for the background field:
\begin{equation}
 S_{\rm eff}=S_{\rm cl}+\frac{1}{2}\,\ln\det\left(-D^2+\mathop{\mathrm{ad}}\bar{K}^2\right).
\end{equation}

The one-loop determinant diverges and requires regularization. Since in two dimensions the only scalar diagram that diverges is the bubble, the divergent part of the effective action is
\begin{equation}
   \parbox{1.5 cm}{\includegraphics[height=1.5 cm] {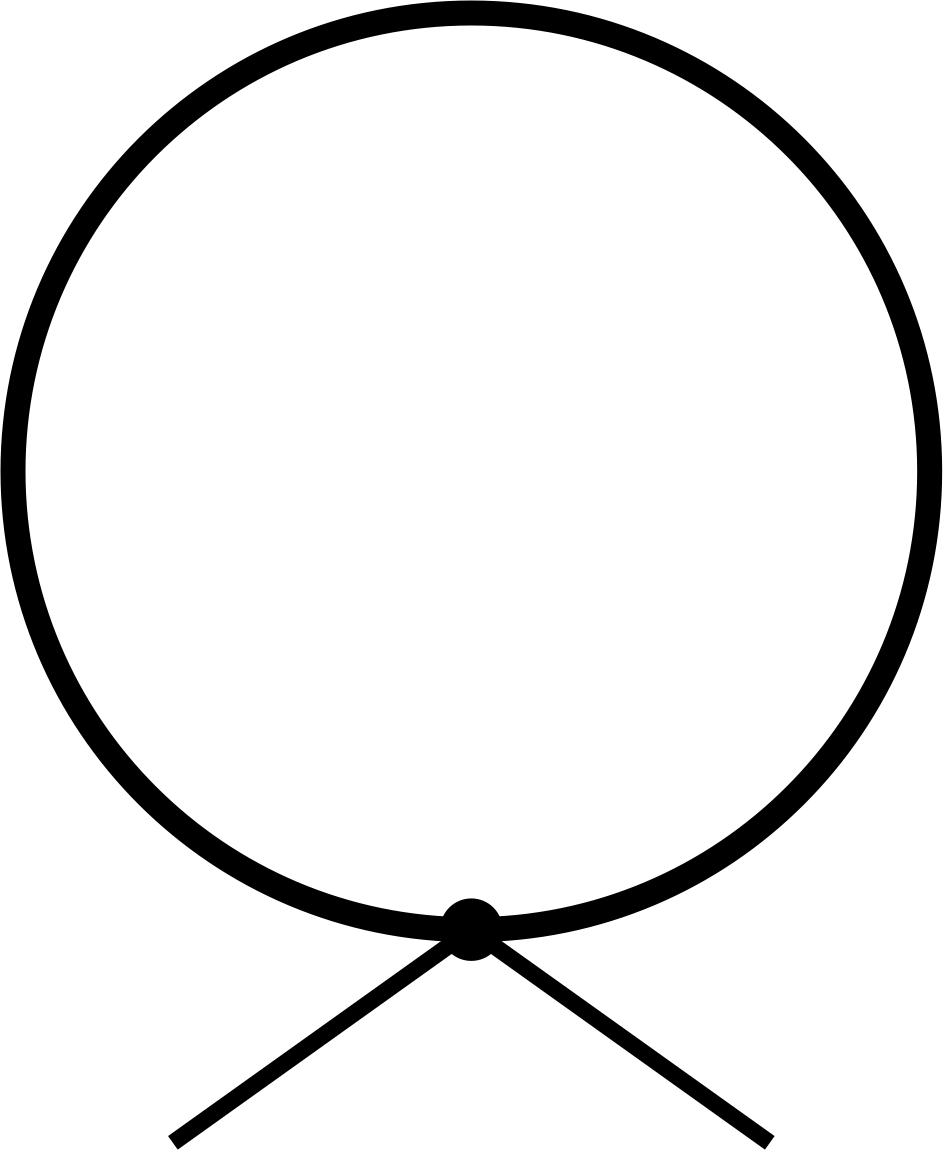}}
   =\frac{1}{2}\int_{}^{}d^2x\,\mathop{\mathrm{tr}}\nolimits_\mathfrak{f}\mathop{\mathrm{ad}}\bar{K}^2
   \times 
   \int_{}^{\Lambda }\frac{d^2p}{\left(2\pi \right)^2}\,\,\frac{1}{p^2}
  =\frac{1}{4\pi }\,\ln\frac{\Lambda }{\mu }\,
  \int_{}^{}d^2x\,\mathop{\mathrm{tr}}\nolimits_\mathfrak{f}\mathop{\mathrm{ad}}\bar{K}^2.
\end{equation}
As a matrix in the adjoint representation, $\mathop{\mathrm{ad}}\bar{K}_a$ has $\mathbbm{Z}_2$ grading $-1$ and thus the following matrix structure:
\begin{equation}
 \mathop{\mathrm{ad}}\bar{K}_a:~\begin{pmatrix}
  0 & * \\ 
  * & 0  \\ 
 \end{pmatrix}
 \begin{pmatrix}
  \mathfrak{h} \\ 
  \mathfrak{f} \\ 
 \end{pmatrix}.
\end{equation}
 So,
 $$
  \mathop{\mathrm{tr}}\nolimits_\mathfrak{f}\mathop{\mathrm{ad}}\bar{K}^2
  =\left\langle \mathfrak{f}\right|\mathop{\mathrm{ad}}\bar{K}_a\left|\mathfrak{h}\right\rangle\left\langle \mathfrak{h}\right|\mathop{\mathrm{ad}}\bar{K}^a\left|\mathfrak{f}\right\rangle
  =\mathop{\mathrm{tr}}\nolimits_\mathfrak{h}\mathop{\mathrm{ad}}\bar{K}^2
  =\frac{1}{2}\,\mathop{\mathrm{tr}}\nolimits_\mathfrak{g}\mathop{\mathrm{ad}}\bar{K}^2.
 $$

The divergence renormalizes the sigma-model coupling constant:
\begin{equation}
 S_{\rm eff}=-\int_{}^{}d^2x\,
 \left(\frac{1}{\kappa ^2}\,\mathop{\mathrm{tr}}-\frac{1}{8\pi }\,\ln\frac{\Lambda }{\mu }\,\mathop{\mathrm{tr}}\nolimits_{\rm adj}\right)\bar{K}_a\bar{K}^a+{\rm finite}.
\end{equation}
Taking into account that
\begin{equation}\label{dualCoxeter}
  \mathop{\mathrm{tr}}\nolimits_{\rm adj} XY=2C_G\,\mathop{\mathrm{tr}}XY,
\end{equation}
where $C_G$ is the dual Coxeter number\footnote{$C_{SU(N)}=N$, $C_{SO(N)}=N-2$.} of the group $G$,
we find that the coupling renormalized at scale $\mu $ is related to the bare coupling by
\begin{equation}
 \frac{1}{\kappa ^2(\mu )}=\frac{1}{\kappa ^2(\Lambda )}-\frac{C_G}{4\pi }\,\ln\frac{\Lambda }{\mu }\,.
\end{equation}

The running of sigma-model coupling with the energy scale is governed by the beta-function, which can be read off from the calculation above:
\begin{equation}
 \beta (\kappa )=\frac{d\kappa(\mu ) }{d\ln \mu }=-\frac{C_G}{8\pi }\,\kappa ^3.
\end{equation}
The one-loop beta function is negative, which means that two-dimensional coset sigma-models are asymptotically free quantum field theories \cite{Polyakov:1975rr}.

\begin{exc}
Calculate the beta-function of the WZW model (\ref{WZW}) \cite{Witten:1983tw}.
\end{exc}

\begin{figure}[t]
\begin{center}
 \centerline{\includegraphics[width=6cm]{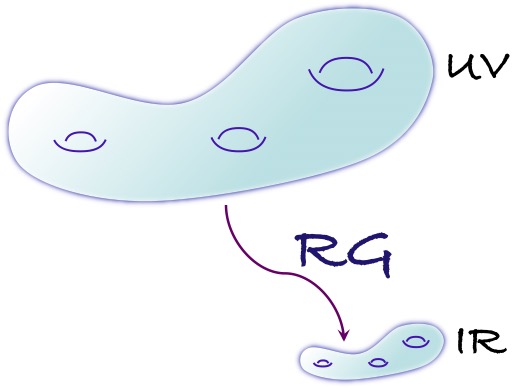}}
\caption{\label{rg}\small The RG flow in a sigma-model.}
\end{center}
\end{figure}

The RG flow in two-dimensional sigma-models is often interpreted geometrically as shrinking of the target-space in the IR \cite{Friedan:1980jf}. The coupling constant rescales the metric in the sigma-model action (\ref{Lagrangian-sigma-general}). Its inverse thus plays the role of an overall radius of the target space. The theory flows to stronger coupling in the IR, with the radius of the target space becoming smaller and smaller along the flow, fig.~\ref{rg}.

\begin{table}[t]
\caption{\small Examples of classically integrable sigma-models, which can be either integrable or not at the quantum level. }
\label{dtheories}
\begin{center}
\begin{tabular}[t]{c|c|c}
QFT & Coset & Integrability \\
\hline\hline
Principal Chiral Field & $G$ & \cmark \\
$O(N)$ model & $SO(N)/SO(N-1)$ & \cmark \\
$\mathbbm{CP}^N$ model & $SU(N+1)/SU(N)\times U(1)$ & \xmark \\
Superstring on $AdS_5\times S^5$ & $PSU(2,2|4)/SO(4,1)\times SO(5)$ & \cmark \\
$O(3)$ model with $\theta \neq 0,\pi $ & $SO(3)/SO(2)$ & \xmark
\end{tabular}
\end{center}
\end{table}

The RG flow will eventually stop and generate a dynamical mass scale through dimensional transmutation\footnote{In the presence of a theta-term or a WZ term this is not necessarily true. The flow may then end in a non-trivial CFT in the IR.}:
\begin{equation}\label{M-Lambda}
 M=\Lambda \,{\rm e}\,^{-\frac{4\pi }{C_G\kappa ^2}}.
\end{equation}
This means that coset sigma-models are massive field theories, with particles transforming under some representation of the symmetry group $G$. The dynamics of these particles is non-perturbative at low energies, but is highly constrained by integrability, of course under the assumption that integrability is not broken by quantization. 

Conservation laws are often affected by quantum anomalies, and integrable QFTs are no exceptions. Whether or not higher conserved charges are anomalous is a dynamical question, with a model-dependent answer. There are cases where integrability is preserved at the quantum level, and cases where integrability is broken by anomalies. Some representative examples are listed in table~\ref{dtheories}.

The dynamical information of a massive QFT is contained in its S-matrix. The S-matrix of an integrable QFT is so constrained by symmetries and analyticity that it can be reconstructed from the kinematical requirements alone, with the help of minimal amount of explicit perturbative data. We illustrate how such integrable bootstrap works taking the $O(N)$ model as an example. The departure point for constructing the exact scattering theory for the $O(N)$ model is its solution at large-$N$, that can be obtained by elementary manipulations of the path integral described in the next section. 

\section{$O(N)$ model: large $N$}\label{LargeN:sec}

The $O(N)$ model is a quantum field theory of an $N$-component vector field $\mathbf{n}$ subject to the constraint $\mathbf{n} ^2=1$. The constraint can be imposed by a Lagrange multiplier in the path integral:
\begin{equation}\label{O(N)pathintegral}
 Z=\int_{}^{} D\mathbf{n}\,D\sigma \,\,{\rm e}\,^{
 \frac{i}{\kappa ^2}\int_{}^{}d^2x\,
 \left[\left(\partial \mathbf{n}\right)^2-\sigma \left(\mathbf{n}^2-1\right)\right]
 }.
\end{equation}
Integration over $\sigma $ yields the requisite delta-function, so $D\mathbf{n}$ is the linear, unconstrained measure in the field space. Integrating over $\mathbf{n}$ first we get an effective action for $\sigma $:
\begin{equation}\label{sseff}
 S_{\rm eff}=\frac{1}{\kappa ^2}\int_{}^{}d^2x\,\sigma 
 +\frac{iN}{2}\,\ln\det\left(-\partial ^2-\sigma \right).
\end{equation}

The two terms are of the same order of magnitude in the 't~Hooft limit:
\begin{equation}
 N\rightarrow \infty ,\qquad \lambda =\frac{\kappa ^2N}{2}\,-\,{\rm fixed}.
\end{equation}
In the 't~Hooft limit the effective action is of an overall order $N$. The semiclassical approximation becomes exact at $N\rightarrow \infty $, and it suffices to solve the classical equations of motion that follow from (\ref{sseff}). Denoting
\begin{equation}
 \left\langle \sigma \right\rangle=m^2, 
 \end{equation}
 and taking the variation of (\ref{sseff}) we get:
\begin{equation}
 \frac{1}{\lambda }=i\left\langle x\vphantom{ \frac{1}{-\partial ^2-m^2 }}\right|
 \frac{1}{-\partial ^2-m^2 }
 \left| \vphantom{ \frac{1}{-\partial ^2-m^2 }}x\right\rangle
\qquad  \Longleftrightarrow \qquad 
 \frac{1}{\lambda }=\int_{}^{}\frac{d^2p}{\left(2\pi \right)^2}\,\,
 \frac{i}{p^2-m^2}
\end{equation}
This is known as the gap equation\footnote{The ground state of the system is of course homogeneous, but the gap equation also has space-time dependent soliton solutions that describe excited states, see \cite{Zarembo:2008hb} for an example.}, which after the Wick rotation, $p_0\rightarrow ip_0$ becomes
\begin{equation}
 \frac{1}{\lambda }=\int_{}^{}\frac{d^2p}{\left(2\pi \right)^2}\,\,
 \frac{1}{p^2+m^2}
 =\frac{1}{2\pi }\,\ln\frac{\Lambda }{m}\,.
\end{equation}
The UV cutoff $\Lambda $ is necessary for regulating the divergent momentum integral. The gap equation determines the mass scale of the theory:
\begin{equation}
 m=\Lambda \,{\rm e}\,^{-\frac{2\pi }{\lambda }}.
\end{equation}
This is consistent with (\ref{M-Lambda}) at large $N$, when the difference between $N$ and $C_{SO(N)}=N-2$ is immaterial.

If we return back to (\ref{O(N)pathintegral}), the expectation value of the Lagrange multiplier $\left\langle \sigma \right\rangle=m^2$ gives mass to the field $\mathbf{n}$, which thus describes $N$ massive particles in the vector representation of $SO(N)$. The $N-1$ Goldstone modes of the naive perturbation theory rearrange themselves into $N$ massive states at low energies. This rearrangement is obviously non-perturbative.

\begin{exc}\label{O(N)S-matrix:explicit}
Calculate the $2\rightarrow 2$ scattering amplitude in the $O(N)$ model to the first non-vanishing order in $1/N$  \cite{Zamolodchikov:1978xm}. 
\end{exc} 

\section{$O(N)$ model: exact solution}\label{ONexact:sec}

Integrability strongly constrains the scattering processes. The constraints follow from no more than simple kinematics.  Consider, to begin with, the scattering of two identical particles $A(p^{\rm in}_1)+A(p^{\rm in}_2)\rightarrow A(p^{\rm out}_1)+A(p^{\rm out}_2)$ in two dimensions, where $p_i^{\rm in}$, $p^{\rm out}_i$ are incoming and outgoing spacial momenta. If the initial momenta are given, energy and momentum conservation constitute two equations for two final-state momenta:
\begin{eqnarray}
 &&p^{\rm out}_1+p^{\rm out}_2=p^{\rm in}_1+p^{\rm in}_2
\nonumber \\
&& E(p^{\rm out}_1)+E(p^{\rm out}_2)=E(p^{\rm in}_1)+E(p^{\rm in}_2).
\nonumber 
\end{eqnarray}
Two equations for two unknowns have a discrete set of solutions -- two-body scattering has no phase space in two dimensions. 

It is easy to convince oneself that the conservation laws admit but two obvious solutions: $p^{\rm out}_1=p^{\rm in}_1$, $p^{\rm out}_2=p^{\rm in}_2$ (transmission) and $p^{\rm out}_1=p^{\rm in}_2$, $p^{\rm out}_2=p^{\rm in}_1$ (reflection). This is true in any translationally-invariant theory in two dimensions, but imagine now that there is an infinite number of additional conservation laws. In any interaction process, even if it involves more than two particles,  exchange of momenta would still be possible, but any other outcome of the scattering process would be kinematically forbidden, because conservation laws constitute a hugely overconstrained system of equations. In integrable theories, which feature an infinite number of conservation laws, particle production is forbidden, scattering is diagonal in particle number, and the final-state  momenta of any $n\rightarrow n$ process constitute a permutation of the initial-state momenta: $p_i^{\rm out}=p_{\sigma (i)}^{\rm in}$. Any permutation can be written as a product of transpositions, likewise $n\rightarrow n$ scattering in an integrable theory factorizes into a sequence of elementary $2\rightarrow 2$ processes. The dynamics of an integrable QFT thus is encoded in the two-body scattering S-matrix. This is a huge simplification that puts integrable models in between trivial (free) theories and generic QFTs where the intricate phase space of intermediate states makes the analytic structure of the S-matrix enormously complicated.

Taking into account that the $O(N)$ model is integrable, and drawing intuition from the large-$N$ approximation, we conclude that the model is characterized by
\begin{itemize}
 \item $N$ massive particles  transforming in the vector representation of $SO(N)$
 \item no particle production
 \item factorized scattering.
\end{itemize}
These basic data is sufficient to completely reconstruct the S-matrix of the theory. Our exposition of this bootstrap solution of the model closely follows the original derivation \cite{Zamolodchikov:1978xm}.

\begin{figure}[t]
\begin{center}
 \centerline{\includegraphics[width=6cm]{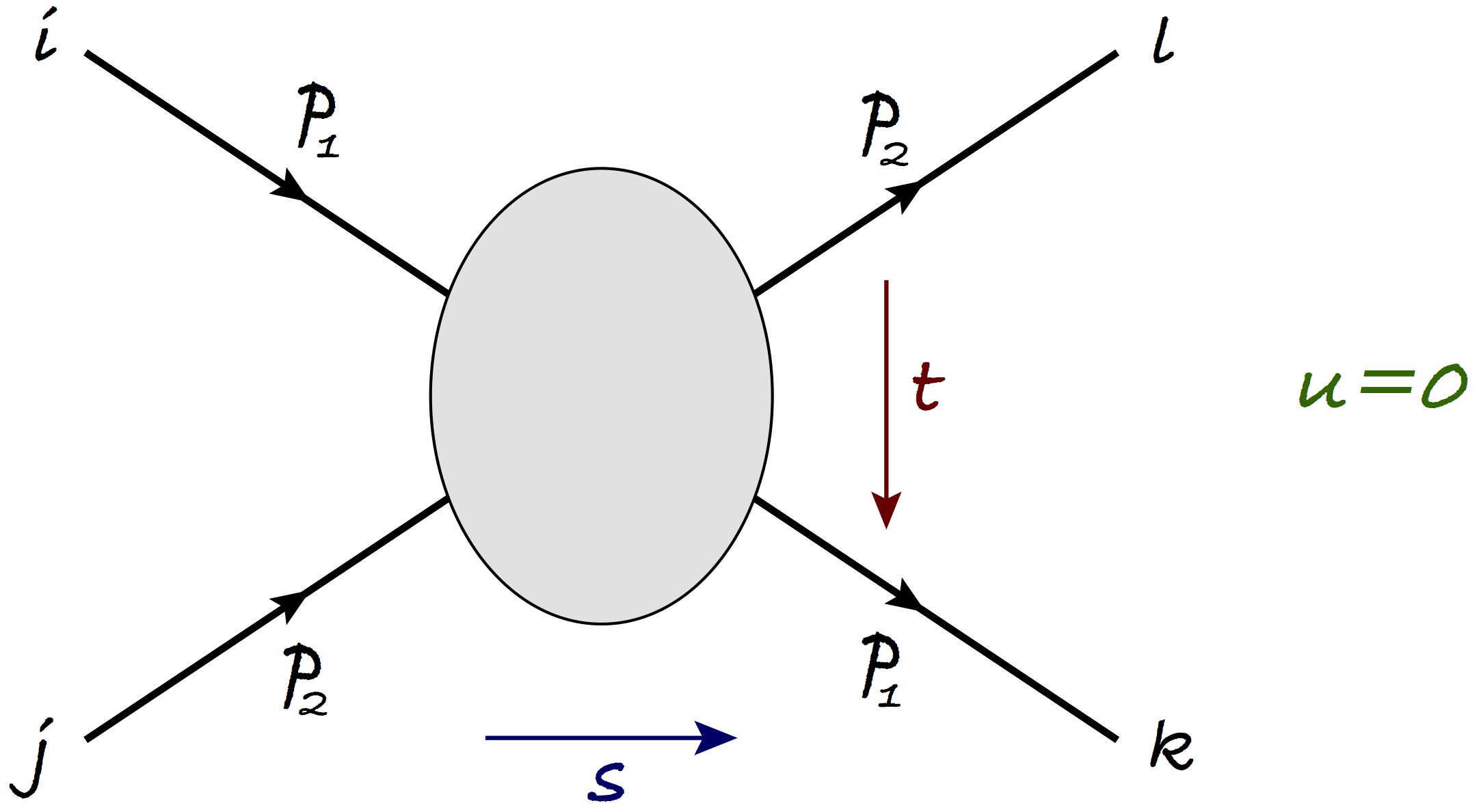}}
\caption{\label{s22}\small The $2\rightarrow 2$ scattering amplitude.}
\end{center}
\end{figure}

\begin{figure}[t]
\begin{center}
 \subfigure[]{
   \includegraphics[height=3.8 cm] {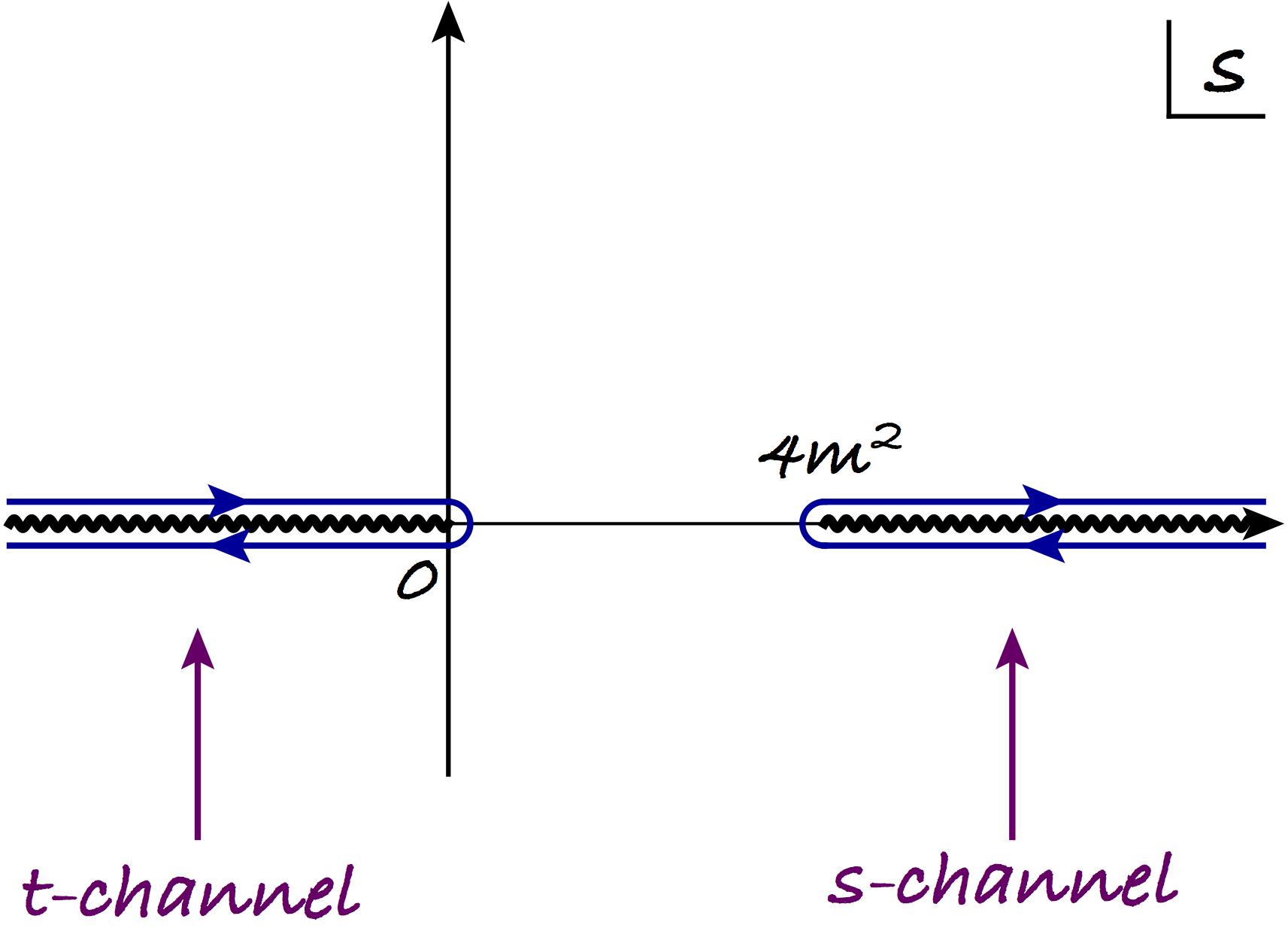}
   \label{fig4:subfig1}
 }
 \subfigure[]{
   \includegraphics[height=3.8 cm] {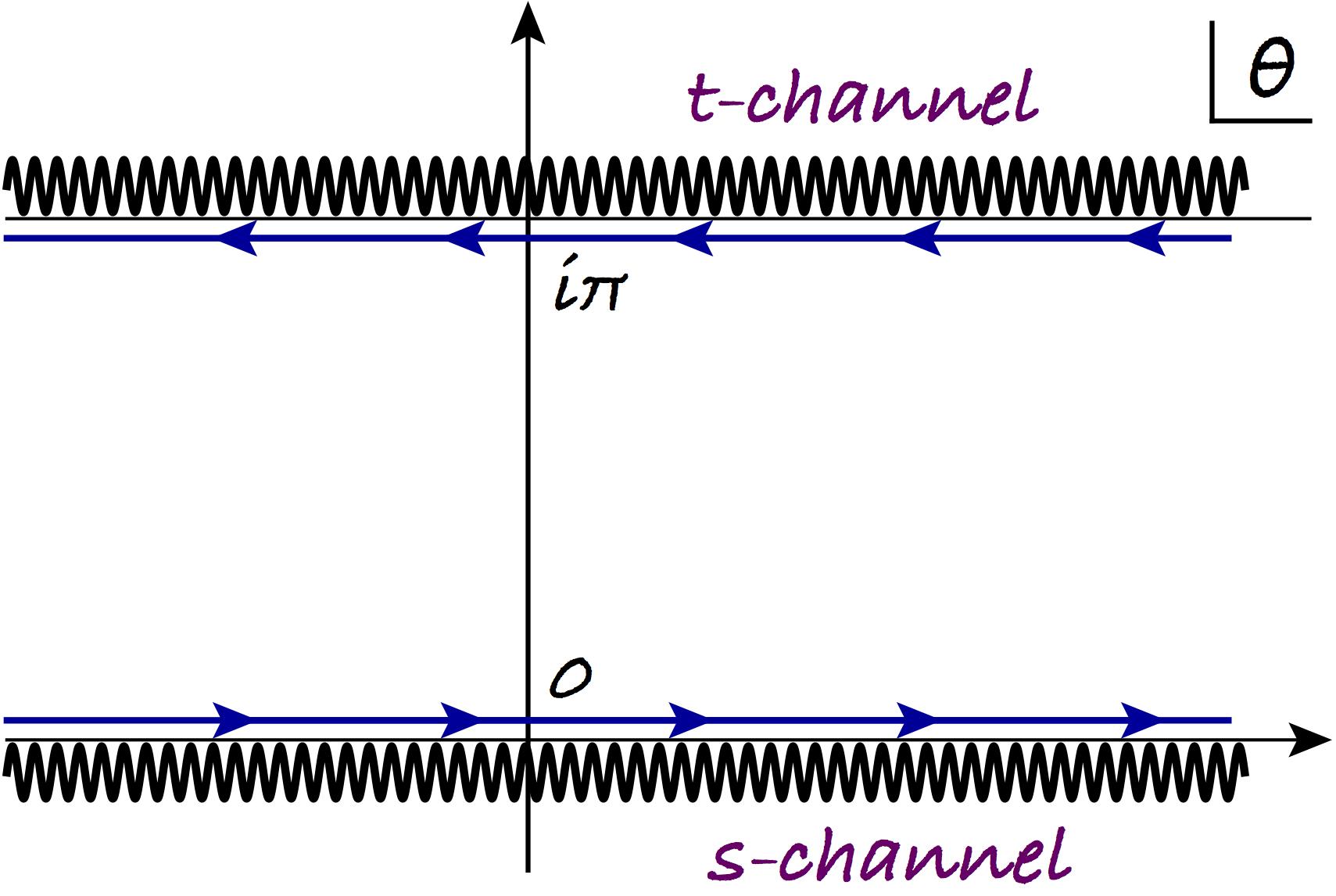}
   \label{fig4:subfig2}
 }
\caption{\label{planes}\small Analytic structure of the S-matrix: (a) in the $s$-plane; (b) in the rapidity plane.}
\end{center}
\end{figure}

Consider $2\rightarrow 2$ scattering illustrated in fig.~\ref{s22}. The Mandelstam invariant $u=0$ in two dimensions, while $s$ and $t$ are related by
\begin{equation}
 s+t=4m^2.
\end{equation}
Hence the $2\rightarrow 2$ S-matrix is a function of only one kinematic variable: $S_{ij}^{kl}(s)$. Moreover, the S-matrix is analytic in $s$ almost everywhere on the complex plane. All the singularities of the S-matrix are associated with intermediate virtual states going on-shell. And here integrability brings in enormous simplifications. The absence of particle production implies that the only allowed singularities are two-particle cuts in the $s$- and $t$-channels, fig.~\ref{fig4:subfig1}. The paucity of analytic functions on the complex plane with two cuts is the ultimate reason for the success of integrability bootstrap.

Two-dimensional relativistic kinematics drastically simplifies in the rapidity variables:
\begin{eqnarray}
 p_{1,2}&=&m\sinh\theta _{1,2}
\nonumber \\
E_{1,2}&=&m\cosh\theta _{1,2}.
\end{eqnarray}
The center-of-mass energy squared is then
\begin{equation}\label{srap}
 s=4m^2\cosh^2\frac{\theta_1-\theta _2 }{2}\,.
\end{equation}
The S-matrix is thus a function of the rapidity difference between the two colliding particles: $S=S(\theta _1-\theta _2)$. The entire $s$-plane maps onto the strip of the theta plane between the lines $\mathop{\mathrm{Im}}\theta =0$ and $\mathop{\mathrm{Im}}\theta =\pi $ (called the physical strip), as shown in fig.~\ref{planes}.

\begin{figure}[t]
\begin{center}
 \centerline{\includegraphics[width=6cm]{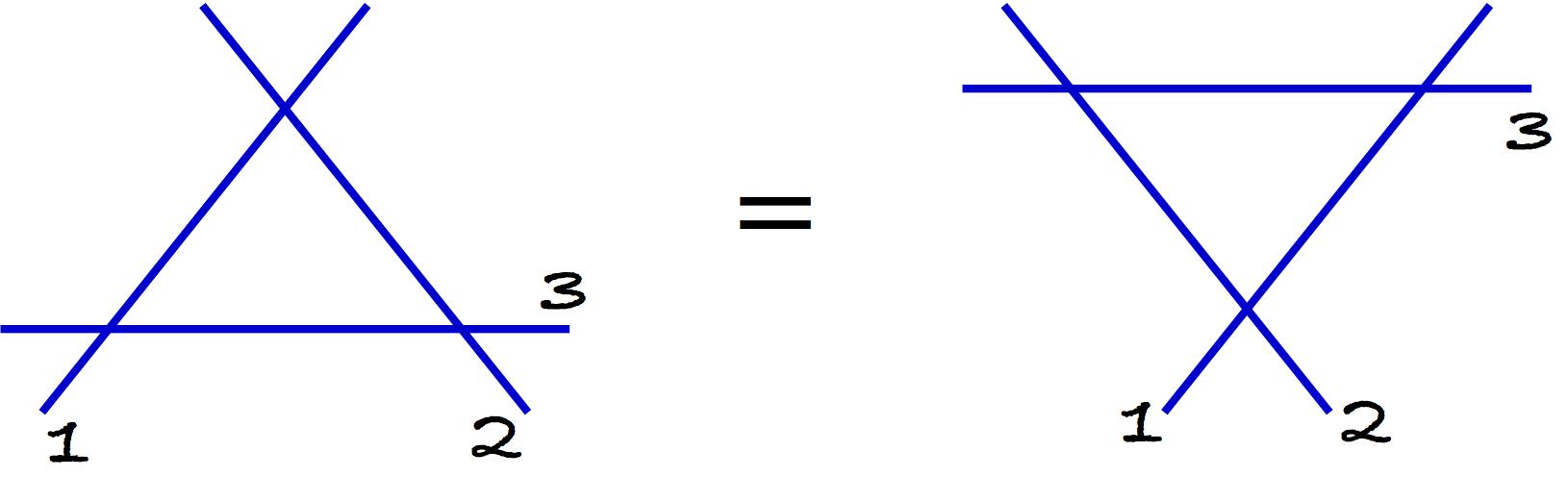}}
\caption{\label{YBE}\small The Yang-Baxter equation.}
\end{center}
\end{figure}

In spite of what the name might suggest, most of the physical strip corresponds to unphysical kinematics. 
The S-matrix is a probability amplitude of a physical scattering process only for real positive $s>4m^2$, cf.~(\ref{srap}). The rest is analytic continuation. More precisely, the scattering amplitude is defined for $s$ real positive $+i\epsilon $, according to the usual Feynman prescription. The upper side of the s-channel cut maps to the positive real semi-axis on the $\theta $-plane (fig.~\ref{planes}). 
Analytic continuation to the entire $s$-plane  leads to a set of S-matrix axioms \cite{Zamolodchikov:1978xm,Dorey:1996gd}:
\begin{itemize}
 \item Unitarity (+ real analyticity):
\begin{equation}\label{unitarity}
 S(-\theta )S(\theta )=\mathbbm{1}
\end{equation}
 \item Crossing symmetry\footnote{Physical particles in the $O(N)$ model are their own anti-particles. More generally, crossing involves charge conjugation.}:
\begin{equation}\label{crossing}
 S_{ij}^{kl}(\theta )=S_{il}^{kj}(i\pi -\theta )
\end{equation}
 \item Yang-Baxter equation (YBE), fig.~\ref{YBE}:
\begin{equation}
 S_{12}(\theta _1-\theta _2)S_{13}(\theta _1-\theta _3)S_{23}(\theta _2-\theta _3)
 =S_{23}(\theta _2-\theta _3)S_{13}(\theta _1-\theta _3)S_{12}(\theta _1-\theta _2)
\end{equation}
The YBE is a consistency condition on the factorization of the $3\rightarrow 3$ scattering into elemental $2\rightarrow 2$ processes. The outcome should not depend on the order in which the three particles collided pairwise. The labels on $S_{ab}$ indicate that its acts on the $SO(N)$ indices of particles $a$ and $b$.
\end{itemize}

\begin{exc}
Write down the YBE in the explicit index notations. 
\end{exc}

Symmetries and the YBE usually produce a skeleton of the S-matrix, a fixed matrix structure without the detailed dependence on the rapidity. Indeed, if $S_{ij}^{kl}(\theta )$ solves the YBE, then $f(\theta )S_{ij}^{kl}(\lambda \theta )$ also does so. To bootstrap the remaining ambiguity one needs to invoke crossing symmetry, unitarity and analyticity. Here we show how this program works for the $O(N)$ model.

The general form of the S-matrix consistent with the $O(N)$ symmetry is
\begin{equation}\label{index-form}
 S_{ij}^{kl}(\theta )=\sigma (\theta )
 \left(\delta _i^k\delta _j^l+\frac{1}{p(\theta )}\,\delta _i^l\delta _j^k+\frac{1}{k(\theta )}\,\delta _{ij}\delta ^{kl}\right).
\end{equation}
First, we impose the YBE.
The overall "dressing" factor $\sigma (\theta )$ then drops out. It is in principle straightforward to determine what the constraints on the two remaining functions are, albeit the calculations are rather lengthy. The algebra can be facilitated by introducing  permutation and trace operators, that act on the tensor product of two $SO(N)$ vectors as
\begin{equation}
 P\,a\otimes b=b\otimes a,\qquad K\, a\otimes b =(a\cdot b)\mathbbm{1}.
\end{equation}
In these notations,
\begin{equation}\label{operator-form}
 S=\sigma (\theta )\left(\mathbbm{1}+\frac{1}{p(\theta )}\,P+\frac{1}{k(\theta )}\,K\right).
\end{equation}
The permutation and trace operators satisfy the following braiding relations:
\begin{eqnarray}\label{addiid}
&&P_{ab}P_{bc}=P_{ac}P_{ab},
\qquad 
 K_{ab}^2=NK_{ab},
\qquad 
 K_{ab}P_{ab}=P_{ab}K_{ab}=K_{ab},
\nonumber \\
&&K_{ab}P_{cb}=P_{cb}K_{ac},
\qquad 
K_{ab}K_{cb}=P_{bc}P_{ab}K_{bc},
\qquad 
P_{ab}P_{bc}K_{ab}=P_{ac}K_{ab}.
\end{eqnarray}

 Algebraic manipulations based on (\ref{addiid}) -- the details of which we omit\footnote{When dealing with the YBE, one really appreciates the power of {\it Mathematica}.} -- reduce the YBE  to a set of three independent equations on functions $k(\theta )$ and $p(\theta )$:
\begin{eqnarray}
 &&p(u)-p(v)=p(u-v)
\nonumber \\
&&k(u)-k(v)=-p(u-v)
\nonumber \\
&&k(u)\left(k(u-v)+k(v)+p(u)+N\right)=k(v)+k(u-v)-p(u).
\end{eqnarray}
The first two equations simply tell us that $k(\theta )$ and $p(\theta )$ are linear functions:
\begin{equation}
 p(\theta )=-\lambda \theta ,
 \qquad 
 k(\theta )=\lambda \theta -\frac{1 }{2\Delta },
\end{equation}
where $\lambda $ and $\Delta $ are arbitrary constants. The last equation is then also satisfied, provided
\begin{equation}
 \Delta =\frac{1}{N-2}\,.
\end{equation}

As anticipated, the YBE fixes the functional structure of the S-matrix up to rescaling $\theta \rightarrow \lambda \theta $, and up to an overall dressing factor. The permutation and trace terms in the S-matrix map to one another under the crossing transformation, which interchanges the $SO(N)$ indices $j$ and $l$ in (\ref{index-form}). The crossing-symmetry condition (\ref{crossing}) then requires that
\begin{equation}
 p(i\pi -\theta )=k(\theta ),
\end{equation}
which fixes
\begin{equation}
 \lambda =\frac{1}{2\pi  i\Delta }\,.
\end{equation}
This determines the S-matrix up to the dressing factor:
\begin{equation}\label{finalSO(N)}
 S_{ij}^{kl}(\theta )=\sigma (\theta )
 \left(\delta _i^k\delta _j^l-\frac{2\pi i\Delta }{\theta}\,\delta _i^l\delta _j^k+\frac{ 2\pi i\Delta }{\theta -i\pi }\,\delta _{ij}\delta ^{kl}\right).
\end{equation}

The dressing factor is constrained by the unitarity and crossing conditions (\ref{unitarity}) and (\ref{crossing}). Substitution of (\ref{operator-form}) into (\ref{unitarity}), and the use of the identities $P^2=\mathbbm{1}$, $K^2=NK$, $PK=K=KP$, gives:
\begin{eqnarray}
&& S(-\theta )S(\theta )=\sigma (-\theta )\sigma (\theta )\left[
 1+\frac{1}{p(-\theta )p(\theta )}+\left(\frac{1}{p(-\theta )}+\frac{1}{p(\theta )}\right)P
 \right.
\nonumber \\
&&\left.
+\left(\frac{1}{k(-\theta )}+\frac{1}{k(\theta )}+\frac{1}{k(-\theta )p(\theta )}+\frac{1}{k(-\theta )p(\theta )}+\frac{N}{k(-\theta )k(\theta )}\right)K
 \right].
\nonumber 
\end{eqnarray}
The coefficients of $P$ and $K$ in this equation identically vanish, and the unitarity condition boils down to 
\begin{equation}
 \sigma (-\theta )\sigma (\theta )=\frac{\theta ^2}{\theta ^2+4\pi^ 2\Delta ^2}\,.
\end{equation}
The crossing symmetry simply requires that
\begin{equation}
 \sigma (i\pi -\theta )=\sigma (\theta ).
\end{equation}
We need to solve these two functional equations on $\sigma (\theta )$.

The dressing factor $\sigma (\theta )$ by itself is not an amplitude of any physical process. It is much more instructive to reformulate the crossing/unitarity equations in terms of a physical S-matrix element for  some particular scattering channel. General features of integrability bootstrap, which are important, for example, in the context of thermodynamic Bethe ansatz (TBA), become more transparent if the equations are written in terms of a physical amplitude.

To construct such we concentrate on a $U(1)$ subgroup of $SO(N)$ generated by rotations in the $(12)$ plane. An $SO(N)$ vector has two components with $U(1)$ charges $+1$ and $-1$ and $N-2$ neutral components. In terms of the fields in the path integral (\ref{O(N)pathintegral}), the charge $+1$ state corresponds to 
$$
 \phi =n_1+in_2.
$$
Because of the charge conservation and the absence of particle production the $\phi \phi \rightarrow \phi \phi $ amplitude is purely elastic, and the S-matrix element
\begin{equation}
 s(\theta )\equiv S_{\phi \phi \rightarrow \phi \phi }(\theta )
\end{equation}
is a pure phase:
\begin{equation}
 s(\theta )=\,{\rm e}\,^{i\Phi (\theta )}.
\end{equation}
From (\ref{finalSO(N)}) we find:
\begin{equation}\label{s-theta}
 s(\theta )=\sigma (\theta )\,\frac{\theta -2\pi i\Delta }{\theta }\,.
\end{equation}

The crossing/unitarity equations for $s(\theta )$ are
\begin{eqnarray}\label{c/u}
 &&s(i\pi -\theta )=s(\theta )\,\frac{\theta \left(\theta +2\pi i\Delta -i\pi \right)}{\left(\theta -2\pi i\Delta  \right)\left(\theta -i\pi \right)}
\nonumber \\
&&s(-\theta )s(\theta )=1.
\end{eqnarray}
From those we infer that
\begin{equation}
 s\left(\theta +\frac{i\pi }{2}\right)s\left(\theta -\frac{i\pi }{2}\right)
 =\frac{\left(\theta -\frac{i\pi }{2}\right)\left(\theta +\frac{i\pi }{2}-2\pi i\Delta \right)}{\left(\theta +\frac{i\pi }{2}\right)\left(\theta -\frac{i\pi }{2}+2\pi i\Delta \right)}\,.
\end{equation}
Written in terms of the phase, the equation becomes
\begin{equation}
 \Phi \left(\theta +\frac{i\pi }{2}\right)+\Phi \left(\theta -\frac{i\pi }{2}\right)
 =2\arctan\frac{2\theta }{\pi }-2\arctan\frac{2\theta }{{\pi }\left(1-4\Delta \right)}
+{\rm p.a.}
\end{equation}
The last term (to be specified shortly) reflects an ambiguity in choosing the branch of the logarithm when reformulating the problem in terms of the scattering phase, which is defined up to shifts by integer multiples of $2\pi $. The branch of the logarithm can be chosen differently for different $\theta $, and we can parameterize the phase ambiguity by writing
\begin{equation}
 {\rm p.a.}= 2\pi \sum_{k}^{}\left(
 \vartheta\left(\theta -i\alpha _k\right)+\vartheta\left(\theta +i\alpha _k\right)
 \right),
\end{equation}
where  $\vartheta(x)$ is the step function, and $\{\alpha _k\}$ is a potentially arbitrary set of parameters.

Another function, that plays an important role in all types of  Bethe ansatz equations, is the derivative of the scattering phase:
\begin{equation}
 K(\theta )=\frac{d\Phi (\theta )}{d\theta }\,.
\end{equation}
For this function we get a difference equation:
\begin{eqnarray}
 K \left(\theta +\frac{i\pi }{2}\right)+K\left(\theta -\frac{i\pi }{2}\right)
 &=&\frac{\pi }{\theta ^2+\frac{\pi ^2}{4}}
 -\frac{\pi\left(1-4\Delta \right) }{\theta ^2+\frac{\pi ^2}{4}\left(1-4\Delta \right)^2}
\nonumber \\
 &&+2\pi \sum_{k}^{}\left(\delta (\theta -i\alpha _k)+\delta (\theta +i\alpha _k)\right),
 \vphantom{\frac{\pi\left(1-4\Delta \right) }{\theta ^2+\frac{\pi ^2}{4}\left(1-4\Delta \right)^2}}
\end{eqnarray}
which can be also written as
\begin{eqnarray}
 2\cos\left(\frac{\pi }{2}\,\,\frac{d}{d\theta }\right)K(\theta )
 &=&\frac{\pi }{\theta ^2+\frac{\pi ^2}{4}}
 -\frac{\pi\left(1-4\Delta \right) }{\theta ^2+\frac{\pi ^2}{4}\left(1-4\Delta \right)^2}
\nonumber \\
 &&+2\pi \sum_{k}^{}\left(\delta (\theta -i\alpha _k)+\delta (\theta +i\alpha _k)\right).
  \vphantom{\frac{\pi\left(1-4\Delta \right) }{\theta ^2+\frac{\pi ^2}{4}\left(1-4\Delta \right)^2}}
\end{eqnarray}
 The equation becomes algebraic in the Fourier space:
\begin{equation}
 \cosh\frac{\pi \omega }{2}\,K(\omega )
 =\pi 
 \,{\rm e}\,^{-\frac{\pi }{2}\,|\omega |}
 -\pi\,{\rm e}\,^{-\frac{\pi }{2}\left(1-4\Delta \right)|\omega |}
 +2\pi \sum_{k}^{}\cosh\alpha _k\omega .
\end{equation}
Fourier transforming back to the $\theta $-representation we get:
\begin{eqnarray}
 2\pi  K(\theta )&=&
 \psi \left(1+\frac{i\theta }{2\pi }\right)
 +\psi \left(1-\frac{i\theta }{2\pi }\right)
 -\psi \left(\frac{1}{2}+\frac{i\theta }{2\pi }\right)
  -\psi \left(\frac{1}{2}-\frac{i\theta }{2\pi }\right)
\nonumber \\
 && -\psi \left(1-\Delta +\frac{i\theta }{2\pi }\right)
 -\psi \left(1-\Delta -\frac{i\theta }{2\pi }\right)
\nonumber \\
&&
 +\psi \left(\frac{1}{2}-\Delta +\frac{i\theta }{2\pi }\right)
  +\psi \left(\frac{1}{2}-\Delta -\frac{i\theta }{2\pi }\right)
  +\sum_{k}^{}\frac{4\pi\cos\alpha _k\cosh\theta }{\sinh^2\theta +\cos^2\alpha _k}\,.
\end{eqnarray}
Integrating and exponentiating the result we find:
\begin{equation}\label{rawS}
 s(\theta )=S_{\rm CDD}\left(\theta \right)\,
 \frac{
 \Gamma \left(1+\frac{i\theta }{2\pi }\right)
  \Gamma \left(\frac{1}{2}-\frac{i\theta }{2\pi }\right)
   \Gamma \left(1-\Delta -\frac{i\theta }{2\pi }\right)
    \Gamma \left(\frac{1}{2}-\Delta +\frac{i\theta }{2\pi }\right)
 }{
  \Gamma \left(1-\frac{i\theta }{2\pi }\right)
  \Gamma \left(\frac{1}{2}+\frac{i\theta }{2\pi }\right)
   \Gamma \left(1-\Delta +\frac{i\theta }{2\pi }\right)
    \Gamma \left(\frac{1}{2}-\Delta -\frac{i\theta }{2\pi }\right)
 }\,,
\end{equation}
where
\begin{equation}
 S_{\rm CDD}(\theta )=\prod_{k}^{}\frac{\sinh\theta -i\cos\alpha _k}{\sinh\theta +i\cos\alpha _k}
\end{equation}
is known as the CDD (Castillejo-Dalitz-Dyson) factor. 

The CDD factor parameterizes fundamental ambiguity in the crossing + unitarity conditions, which do not fix the S-matrix  completely. However, each CDD factor  introduces extra poles in the rapidity plane. Any pole  on the physical strip is associated with a bound state, and no new poles can be added by hand without changing the particle content of the theory. Usually this means that the minimal solution of the crossing equation is the physical one, but sometimes the minimal solution has spurious poles that have to be cancelled by CDD factors. 
The $O(N)$ model is a good example of this phenomenon.

The minimal solution without CDD factors would have a pole on the physical strip, at $\theta =i\pi (1-2\Delta )$. The pole has to be cancelled by a single CDD factor with 
$$
\alpha =\frac{\pi }{2}\left(1-4\Delta \right).
$$
After which (\ref{rawS}) becomes
\begin{equation}\label{s-final}
 s(\theta )=- \frac{
 \Gamma \left(1+\frac{i\theta }{2\pi }\right)
  \Gamma \left(\frac{1}{2}-\frac{i\theta }{2\pi }\right)
   \Gamma \left(\Delta -\frac{i\theta }{2\pi }\right)
    \Gamma \left(\frac{1}{2}+\Delta +\frac{i\theta }{2\pi }\right)
 }{
  \Gamma \left(1-\frac{i\theta }{2\pi }\right)
  \Gamma \left(\frac{1}{2}+\frac{i\theta }{2\pi }\right)
   \Gamma \left(\Delta +\frac{i\theta }{2\pi }\right)
    \Gamma \left(\frac{1}{2}+\Delta -\frac{i\theta }{2\pi }\right)
 }\,.
\end{equation}
This is the exact scattering amplitude of the $\phi $ quanta.

The dressing factor can be extracted from (\ref{s-theta}):
\begin{equation}\label{sigma-final}
 \sigma (\theta )=Q(\theta )Q(i\pi -\theta ),
 \qquad 
 Q(\theta )=\frac{
  \Gamma \left(\frac{1}{2}-\frac{i\theta }{2\pi }\right)
   \Gamma \left(\Delta -\frac{i\theta }{2\pi }\right)
 }{
  \Gamma \left(-\frac{i\theta }{2\pi }\right)
     \Gamma \left(\frac{1}{2}+\Delta -\frac{i\theta }{2\pi }\right)
 }\,.
\end{equation}
Crossing symmetry is manifest in (\ref{sigma-final}), while unitarity is manifest in (\ref{s-final}). 

\begin{exc}
 Expand the exact S-matrix of the $O(N)$ model to the first non-vanishing order in $1/N$, and compare with the result of the explicit calculation from exercise \ref{O(N)S-matrix:explicit}.
\end{exc}

\section{Crash course in superalgebras}\label{CCIS:sec}

String theory requires supersymmetry for internal consistency. If a sigma-model is to describe critical ten-dimensional string it has to be supersymmetric. Supersymmetric counterparts of the coset sigma-models that we studied so far are ubiquitous in holographic duality, and many of them are integrable, just like in purely bosonic case. Fields in these models take values in a supermanifold, and are invariant under the action of a supergroup. Any detailed introduction to supergroups and superalgebras (which can be found in \cite{Kac:1977qb,Frappat:1996pb}) would occupy too large space-time volume, but the most common examples can be introduced very easily,  as algebras of matrices of certain type, just like  most common examples of ordinary Lie algebras are algebras of matrices constrained to be traceless, anti-symmetric, and so on.

In addition to bosonic generators $T_M$, a superalgebra contains supercharges $Q_\alpha $. An element of a superalgebra is a linear combination
$$
 X=X^MT_M+\theta ^\alpha Q_\alpha ,
$$
where $\theta ^\alpha $ are Grassmann-odd variables. The commutator on a superalgebra must satisfy the usual requirements of linearity, anti-symmetry and obey the Jacobi identity. A group element is defined by an exponential map $g=\,{\rm e}\,^{X}$, which is actually polynomial in supercharges because a sufficiently large power of Grassmann variables vanishes identically.

In any finite-dimensional representation  the generators $T_M$ and $Q_\alpha $ are  ordinary numeric matrices. An element of the Lie superalgebra  $X$ then forms a supermatrix -- a matrix some of whose elements are Grassmann-odd numbers. 

Complex supermatrices can be defined as linear transformations of $\mathbbm{C}^{(n|m)}$, a space of vectors whose first $n$ components are Grassmann-even and the last $m$ ones are Grassmann-odd:
$$
 \left(\,
\begin{array}{ c : c }
 B & \Theta  \\\hdashline
 \Theta ' & B' \\  
\end{array}\,\right)
\left(\,
\begin{array}{ c }
 b \\\hdashline
 f \\  
\end{array}\,\right).
$$
To preserve the Grassmann-parity structure, the diagonal blocks of a supermatrix must be even and the off-diagonal blocks must be odd.

Supermatrices can be multiplied and commuted, with the commutator obeying all the usual axioms of the Lie bracket. The algebra of all $(n|m)\times (n|m)$ supermatrices is denoted by $\mathfrak{gl}(n|m)$. This superalgebra is not simple. An obvious way to get a simple algebra is to impose the traceless condition, and here the first deviation from the usual Lie algebras occurs. 

Consider this simple example:
$$
 \left[\begin{pmatrix}
  0 & \theta _1 \\ 
  \psi _1 & 0 \\ 
 \end{pmatrix},
 \begin{pmatrix}
  0 & \theta _2 \\ 
  \psi _2 & 0 \\ 
 \end{pmatrix}\right]
 =\begin{pmatrix}
  \theta _1\psi _2-\theta _2 \psi _1 & 0 \\ 
  0  & \psi _1 \theta _2- \psi _2\theta _1 \\ 
 \end{pmatrix}.
$$
When $\theta _i$, $\psi _i$ are ordinary numbers the sum of diagonal matrix elements on the right-hand side vanishes: the trace of a commutator is always zero. But if $\theta$'s and $\psi$'s are Grassmann numbers, the diagonal matrix elements differ by a sign and the trace does not vanish. Since the trace of a commutator of supermatrices is not zero, we conclude that traceless supermatrices do not form a subalgebra in $\mathfrak{gl}(1|1)$. 

This can be fixed in the following way. In the above example the difference, rather then the sum, of the diagonal matrix elements is zero. This is know as supertrace, in general defined as
\begin{equation}
 \mathop{\mathrm{Str}}
  \left(\,
\begin{array}{ c : c }
 B & \Theta  \\\hdashline
 \Theta ' & B' \\  
\end{array}\,\right)=\mathop{\mathrm{tr}}B-\mathop{\mathrm{tr}}B'.
\end{equation}

\begin{exc}
Show that  for any supermatrices
\begin{equation}
 \mathop{\mathrm{Str}}[M_1,M_2]=0.
\end{equation}
\end{exc}

Because of the last property, restriction to {\it supertraceless} supermatrices gives a closed algebra, known as the special linear superalgebra $\mathfrak{sl} (n|m)$. The $\mathfrak{sl}(n|m)$ superalgebras are simple if $m\neq n$, but the  case of $m=n$ is special. Since 
$$
 \mathop{\mathrm{Str}}
  \left(\,
\begin{array}{ c : c }
 \mathbbm{1} & 0  \\\hdashline
 0 & \mathbbm{1} \\  
\end{array}\,\right)=n-m,
$$
for $m=n$ the trace condition does not eliminate the unit matrix! The unit matrix commutes with anything and thus represents a central element of the algebra. The factor-algebra, where this central element is set to zero is denoted by
\begin{equation}
 \mathfrak{psl}(n|n)=\mathfrak{sl}(n|n)/C.
\end{equation}
In any irreducible representation the central element takes a constant numeric value, and $\mathfrak{psl(n|n)}$ is defined by keeping only those representations of  $\mathfrak{sl}(n|n)$ in which the central element is equal to zero. It is interesting to notice that the defining $(n|n)$-dimensional representation of $\mathfrak{sl}(n|n)$ is not a representation of $\mathfrak{psl}(n|n)$.

The $\mathfrak{psl}(n|n)$  superalgebras are remarkable in many respects. Being finite-dimensional they admit a central extension (never happens for ordinary Lie algebras). They also have zero dual Coxeter number, where the dual Coxeter number is defined as in the bosonic case through eq.~(\ref{dualCoxeter}). To see this consider a diagonal supermatrix $K=\mathop{\mathrm{diag}}(a_1\ldots a_n|b_1\ldots b_n)$, which represents an element of $\mathfrak{psl}(n|n)$ if the sums of $a_i$'s and $b_I$'s are separately zero. As for it's action in the adjoint representation, the eigenvalues of  $\mathop{\mathrm{ad}}K$ are $a_i-a_j$, $b_I-b_J$, $a_i-b_J$ and $b_I-a_j$ for all unordered pairs of indices $ij$, $IJ$, $iJ$ and $Ij$. Therefore,
$$
 \mathop{\mathrm{Str}}\nolimits_{\rm adj}K^2=\sum_{ij}^{}(a_i-a_j)^2+
 \sum_{IJ}^{}(b_I-b_J)^2-\sum_{iJ}^{}(a_i-b_J)^2-\sum_{Ij}^{}(b_I-a_j)^2
 =0.
$$

The unitary superalgebras $\mathfrak{(p)su}(p,q|r,s)$  are real forms of $\mathfrak{(p)sl}(n|m)$ singled out by the Hermiticity condition
\begin{equation}
 M^\dagger =-B^{-1}MB,
 \qquad 
 B=\mathop{\mathrm{diag}}\left(+^p-^q|+^r-^s\right).
\end{equation}

Another important class of superalgebras is the supersymmetric counterpart of $\mathfrak{so}(n)$. Ordinary orthogonal algebra is singled out by the anti-symmetry condition imposed on elements of $\mathfrak{gl}(n)$. For supermatrices this does not work because transposition is not compatible with commutators: in general, $(M_1M_2)^t\neq M_2^tM_1^t$ for supermatrices, since their off-diagonal Grassmann entries do not commute but anti-commute. 

The supersymmetric generalization of transposition, called  supertransposition, is defined as
\begin{equation}\label{supertransposition}
 \begin{pmatrix}
  A & \Theta  \\ 
  \Psi  & B \\ 
 \end{pmatrix}^{st}
 =\begin{pmatrix}
 A^t  & -\Psi ^t   \\ 
  \Theta ^t &  B^t \\ 
 \end{pmatrix}.
\end{equation}
It follows that
\begin{equation}
 (M_1M_2)^{st}=M_2^{st}M_1^{st},
\end{equation}
and consequently supertransposition preserves commutators up to a sign.

An orthosymplectic algebra $\mathfrak{osp}(n|2m)$ is defined as a subset of $\mathfrak{gl}(n|2m)$ singled out by the condition
\begin{equation}
 M^{st}=-H^{-1}MH,
\end{equation}
where
\begin{equation}
 H=\begin{pmatrix}
  \mathbbm{1}_{n\times n} & 0 \\ 
  0 & J \\ 
 \end{pmatrix},
\end{equation}
and $J$ is the $2m\times 2m$ symplectic matrix:
\begin{equation}\label{symplecticM}
 J=\begin{pmatrix}
  0 & \mathbbm{1}_{m\times m} \\ 
   -\mathbbm{1}_{m\times m} & 0 \\ 
 \end{pmatrix}.
\end{equation}
Explicitly,
\begin{equation}
 \mathfrak{osp}(n|2m)=\left\{
 \begin{pmatrix}
  A & \Theta  \\ 
  \Psi  & B \\ 
 \end{pmatrix}\in \mathfrak{gl}(n|2m)\right|\left.
 A^t=-A,\,B^t=JBJ,\,\Theta ^t =J\Psi  ^t
 \vphantom{ \begin{pmatrix}
  A & \Theta  \\ 
  \Psi  & B \\ 
 \end{pmatrix}}
 \right\}.
\end{equation}
The bosonic subalgebra of $\mathfrak{osp}(n|2m)$  is $\mathfrak{so}(n)\oplus\mathfrak{sp}(2m)$.

\begin{exc}
Check that superalgebras $\mathfrak{osp}(2n+2|2n)$ have zero dual Coxeter number.
\end{exc}

\section{Supercoset sigma-models}\label{supercosets:sec}

A natural supersymmetric generalization of a symmetric space is based on an automorphism $\Omega :\mathfrak{g}\rightarrow \mathfrak{g}$ that squares to the fermion parity:
\begin{equation}
 \Omega ^2=\left(-1\right)^F.
\end{equation}
Such automorphism has order four: $\Omega ^4={\rm id}$, and the superalgebra has a $\mathbbm{Z}_4$ grading:
\begin{equation}
 \mathfrak{g}=\mathfrak{h}_0\oplus\mathfrak{h}_1\oplus\mathfrak{h}_2\oplus\mathfrak{h}_3,
\end{equation}
where
\begin{equation}
 \Omega (\mathfrak{h}_n)=i^4\mathfrak{h}_n.
\end{equation}
The $\mathbbm{Z}_4$ decomposition is consistent with the Grassmann parity: $\mathfrak{h}_0\oplus\mathfrak{h}_2$ form the bosonic subalgebra of $\mathfrak{g}$ and all the supercharges belong to $\mathfrak{h}_1\oplus\mathfrak{h}_3$.

If $H_0$ is the subgroup of $G$ whose  Lie algebra $\mathfrak{h}_0$ is $\mathbbm{Z}_4$-invariant, the coset $G/H_0$ is called semi-symmetric superspace \cite{Serganova}. Anti-de-Sitter geometries in various dimensions arise as bosonic sections of $\mathbbm{Z}_4$ cosets, which play an important role in the holographic duality. The key example is string theory on $AdS_5\times S^5$ whose sigma-model \cite{Metsaev:1998it} is a semi-symmetric coset \cite{Berkovits:1999zq}.
Just like for ordinary symmetric-space sigma-models, integrability of semi-symmetric cosets can be established at the algebraic level, using constraints imposed on the equations of motion by the $\mathbbm{Z}_4$-symmetry.

The current (\ref{LIcurrent}) decomposes into four components according to their $\mathbbm{Z}_4$ grading:
\begin{equation}
 j=j_0+j_1+j_2+j_3.
\end{equation}
The current components $j_0$, $j_2$ expand in bosonic generators of the superalgebra, while the components $j_1$, $j_3$ are fermionic -- they are linear combinations of supercharges with Grassmann-odd coefficients.

The action  of the $\mathbbm{Z}_4$ coset sigma-model is
\begin{equation}\label{z4cosetaction}
 S=\int_{}^{}d^2x\,\mathop{\mathrm{Str}}\left(
 \sqrt{-h}h^{ab}j_{2\,a}j_{2\,b}+\varepsilon ^{ab}j_{1\,a}j_{3\,b}
 \right).
\end{equation}
The first term is the usual sigma-model Lagrangian, while the second term descends from the WZ action, which for supercosets is not topological and can be  written in a manifestly 2d form.

\begin{exc}
Derive the equations of motion for a $\mathbbm{Z}_4$ coset and check that the combined system of equations of motion and the flatness condition of the left-invariant current admits  zero-curvature representation with the Lax connection \cite{Bena:2003wd}
\begin{equation}
 L=j_0+\frac{z^2+1}{z^2-1}\,j_2-\frac{2z}{z^2-1}\,*j_2+\sqrt{\frac{z+1}{z-1}}\,j_1+\sqrt{\frac{z-1}{z+1}}\,j_3.
\end{equation}
Therefore the model is completely integrable.
\end{exc}

The action (\ref{z4cosetaction}), taken at face value, is quadratic in derivatives, which is fine for bosonic degrees of freedom, but a fermion Lagrangian is normally of Dirac type and is expected to be linear in derivatives. It is instructive to see how this apparent contradiction is resolved in $\mathbbm{Z}_4$ cosets.

To get some intuition about the structure of the sigma-model action we can take the coset representative in the standard exponential form: 
\begin{equation}\label{expcoset}
 g=\,{\rm e}\,^{X},\qquad X=X_1+X_2+X_3=\theta ^\alpha _1Q_{_1\,\alpha }+X^MT_M+\theta ^{\dot{\alpha }} _3Q_{_3\,\dot{\alpha } },
\end{equation}
and expand the currents in $X$ to the linear order. Then $j_{n\,a}\simeq \partial_a X_n$. The sigma-model term becomes $(\partial X^M)^2$, while the WZ term gives $\varepsilon ^{ab}\partial _aX_1\partial _bX_3$, which looks like a second-order kinetic term for fermions, but in fact is a total derivative that integrates to zero:
$$
 \varepsilon ^{ab}\partial _aX_1\partial _bX_3=\partial _a\left(\varepsilon ^{ab}X_1\partial _bX_3\right).
$$
Quadratic term for fermions appears only at the next order in the expansion.

The cubic part of the action, one can show, contains\footnote{To simplify the notations we consider flat Minkowski metric on the worldsheet from now on.}
\begin{eqnarray}\label{GSaction}
 \mathcal{L}_F^{(3)}&=&\mathop{\mathrm{Str}}\left(
 \Pi _-^{ab}X_1[\partial _aX_2,\partial _bX_1]
 +\Pi _+^{ab}X_3[\partial _aX_2,\partial _bX_3]
 \right)
\nonumber \\
 &=&\Pi _-^{ab}\theta _{1\,\alpha} f_{M\beta }^\alpha \partial_a X^M\partial_b\theta _1^\beta
 +\Pi _+^{ab}\theta _{3\,\dot{\alpha }} f_{M\dot{\beta }}^{\dot{\alpha }} \partial_a X^M\partial_b\theta _3^{\dot{\beta }},  
\end{eqnarray}
where $f_{AB}^C$ are structure constants of the superalgebra $\mathfrak{g}$, and $\Pi _\pm$ are the worldsheet chirality projectors:
\begin{equation}
 \Pi _\pm^{ab}=\eta ^{ab}\pm \varepsilon ^{ab}.
\end{equation}
This action is exactly the same as the quadratic part of the Green-Schwarz (GS) action for the superstring \cite{Green:1983wt}, with the structure constants $f_{M\alpha }^\beta $ replacing the Dirac matrices.

The GS action is notoriously difficult to quantize covariantly, one manifestation of which is absence of kinetic terms for fermions. Those can be induced if bosonic currents have a non-zero expectation value: $\left\langle \partial X^M\right\rangle\neq 0$. The most familiar instance when this happens is the light-cone gauge:
\begin{equation}\label{LCgauge}
 X^+=\tau .
\end{equation}
Taking into account that
\begin{equation}
 \Pi _\pm^{0b}\partial _b=\partial _\pm,
\end{equation}
where $\partial _\pm$ are the worldsheet light-cone derivatives, we see that the GS action in the light-cone gauge describes the standard left- and right-moving Majorana fermions:
\begin{equation}\label{lcfermions}
 \mathcal{L}_{F,{\rm lc}}^{(3)}=\theta _1f_+\partial _-\theta _1+\theta _3f_+\partial _+\theta _3.
\end{equation}

\begin{exc}\label{bckrZ4}
 Consider a slightly more general setup. As in sec.~\ref{QSM:sec}, take the coset representative in the form (\ref{backgroundexp}) where $\bar{g}$ is the background field that belongs to the bosonic subgroup of $G$ and satisfies the equations of motion, and expand the action to the quadratic order in fluctuations.  Reproduce the trilinear terms in the action (\ref{GSaction}) from this more general result. Show that  the light-cone gauge-fixed action in addition to (\ref{lcfermions}) contains a mass term that couples $\theta _1$ and $\theta _3$. 
\end{exc}

\begin{exc}
Use the results of the exercise \ref{bckrZ4} to compute the one-loop beta function for generic semi-symmetric coset \cite{Zarembo:2010sg}. Show that the beta function vanishes as soon as the superalgebra $\mathfrak{g}$ has zero dual Coxeter number ($\mathop{\mathrm{Str}}\nolimits_{{\rm adj}}K^2=0$ $\forall~K\in\mathfrak{h}_2$). 
 \end{exc}
 
Semi-symmetric superspaces are all classified \cite{Serganova}. As we have seen each of them gives rise to an integrable GS-type sigma-model. Not  all of these sigma-models are consistent as string backgrounds. There is a number of additional constraints that need to be satisfied. The sigma-model must be conformally invariant and must have zero beta function, to begin with. The one-loop beta function of semi-symmetric cosets is proportional to the supertrace of the background current in the adjoint, and only those cosets whose dual Coxeter number vanishes have a chance to be consistent as string sigma-models.  Two infinite series of superalgebras with vanishing adjoint supertrace were identified in sec.~\ref{CCIS:sec}: $\mathfrak{psl}(n|n)$ and $\mathfrak{osp}(2n+2|2n)$. 

To define critical superstring in ten dimensions a sigma-model should also have central charge $c=26$, which is more or less equivalent to the condition that the physical degrees of freedom in the light-cone gauge are $8_b+8_f$ transverse fluctuation modes of the string. We are not going to present a comprehensive analysis of all possible case, but restrict exposition to two representative examples.

The $\mathfrak{psl}(m|m)$ superalgebras admit a number of $\mathbbm{Z}_4$ automorphisms. One of them, defined for even $m=2n$, is a combination of supertransposition (\ref{supertransposition}) and a conjugation:
\begin{equation}
 \Omega =-st\circ \mathop{\mathrm{Ad}}\mathop{\mathrm{diag}}(J,J),
\end{equation}
where $J$ is the symplectic matrix (\ref{symplecticM}). The invariant subalgebra of $\Omega $ is
\begin{equation}
 \mathfrak{h}_0=\left\{\mathop{\mathrm{diag}}(A,B)\right|\left.
 A=JA^tJ,\,B=JB^tJ
 \right\},
\end{equation}
namely,
\begin{equation}
 \mathfrak{h}_0=\mathfrak{sp}(2n)\oplus\mathfrak{sp}(2n).
\end{equation}
The bosonic dimension of the coset (the number of generators in $\mathfrak{h}_2$) is $2[(4n^2-1)-(2n^2+n)]=4n^2-2n-2$. For $n=2$, this equals $10$.

Choosing the real form to be $\mathfrak{psu}(2,2|4)$, and taking into account that $SU(4)=Spin(6)$, $SU(2,2)=Spin(4,2)$, $USp(4)=Spin(5)$ and $USp(2,2)=Spin(4,1)$ we get the coset $PSU(2,2|4)/Spin(4,1)\times Spin(5)$ whose bosonic section is
$$
 Spin(4,2)/Spin(4,1)\times Spin(6)/Spin(5)= AdS_5\times S^5.
$$
The coset therefore supersymmetrizes  $AdS_5\times S^5$, the background that plays so prominent r\^ole in the AdS/CFT correspondence.

The off-diagonal blocks of a $\mathfrak{psu}(2,2|4)$ element supply $2\times 4^2=32$ real supercharges.  In the coset parameterization (\ref{expcoset}) those give rise to $32$ fermion fields in the sigma-model. However, only half of them correspond to physical, propagating degrees of  freedom. This is readily seen in the light-cone gauge associated with $T_+=(++--|++--)\sim D-J$, where $D$ is the dilatation generator of the conformal algebra and $J$ is the $\mathfrak{so}(6)$ angular momentum. So chosen $T_+$ is indeed a null element of $\mathfrak{psu}(2,2|4)$:
$\mathop{\mathrm{Str}}T_+^2=0$.
The light-cone coordinate $X^+$ is a combination of global AdS time and an angle on $S^5$, so the classical solution (\ref{LCgauge}) describes  a point-like string moving along the big circle of $S^5$ at the speed of light \cite{Berenstein:2002jq}.
The kinetic terms in the gauge-fixed fermion action contain $f_+=\mathop{\mathrm{ad}}T_+$. When restricted to the fermionic subspace of $\mathfrak{psu}(2,2|4)$, $\mathop{\mathrm{ad}}T_+$ has eight eigenvalues $+2$, eight eigenvalues $-2$ and sixteen zero eigenvalues. Half of the fermions (those corresponding to the zero modes of $\mathop{\mathrm{ad}}T_+$) do not appear in the action at all. This is a manifestation of the kappa-symmetry, a Grassmann-odd gauge symmetry that at the linearized level acts as a shift symmetry: $\theta \rightarrow \theta_\alpha  +\varepsilon \kappa_\alpha  $, where $\kappa _\alpha $ are zero eigenvectors of $\mathop{\mathrm{ad}}T_+$. Gauge-fixing the kappa-symmerty leaves 16 Majorana fermions, 8 of which are left- and 8 are right-moving. This is the spectrum of  critical superstring in ten dimensions.

Another example is the sigma-model based on the $\mathfrak{osp}(2n+2|2n)$ series of superalgebras. These superalgebras admit a $\mathbbm{Z}_4$ automorphism
\begin{equation}
 \Omega =\mathop{\mathrm{Ad}}\mathop{\mathrm{diag}}(J|+^p-^{n-p}+^p-^{n-p}).
\end{equation}
The invariant subalgebra is $\mathfrak{h}_0=\mathfrak{u}(n+1)\oplus\mathfrak{sp}(2p)\oplus\mathfrak{sp}(2n-2p)$, and the bosonic dimension of the coset is $(n+1)(2n+1)-(n+1)^2+(2n^2+n)-(2p^2+p)-[2(n-p)^2+n-p]=n^2+n+4np-4p^2$, which gives ten for $n=2$ and $p=1$ (for larger $n$ the dimension is too big). Taking into account that the appropriate real form of $\mathfrak{sp}(4)$ is $\mathfrak{so}(3,2)$, that $\mathfrak{so}(6)=\mathfrak{su}(3)$, and that the denominator subalgebra is $\mathfrak{u}(3)\oplus\mathfrak{so}(3,1)$ we get the coset $OSp(6|4)/U(3)\times SO(3,1)$ whose bosonic section is
\begin{equation}
 SU(4)/U(3)\times SO(3,2)/SO(3,1)=\mathbbm{CP}^3\times AdS_4,
\end{equation}
another representative holographic background.

\begin{exc}
Show that the number of kappa symmetries in the \\ $OSp(6|4)/U(3)\times SO(3,1)$ coset is such that the light-cone spectrum contains $8_L+8_R$ Majorana fermions.
\end{exc}

\subsection*{Acknowledgements}
I would like to thank the organizers of the Les Houches summer school "Integrability: From Statistical Systems To Gauge Theory"  for the opportunity to present these lectures. I am grateful to X.~Chen-Lin, D.~Medina-Rincon and E.~Wid\'en for carefully reading the notes and for many suggestions for improvements.
This work was supported by the Marie
Curie network GATIS of the European Union's FP7 Programme under REA Grant
Agreement No 317089, by the ERC advanced grant No 341222, by the Swedish Research Council (VR) grant
2013-4329, by the grant "Exact Results in Gauge and String Theories" from the Knut and Alice Wallenberg foundation, and by RFBR grant 15-01-99504. 

\appendix

\bibliographystyle{nb}
%\bibliography{refs}

\end{document}